\providecommand{\U}[1]{\protect\rule{.1in}{.1in}}
\definecolor{silver}{RGB}{200,200,200}
\newtheorem{theorem}{Theorem}
\newtheorem{claim}[theorem]{Claim}
\begin{document}

\title{Quantifying the Rise and Fall of Complexity in Closed Systems: The Coffee Automaton}
\author{Scott Aaronson\thanks{MIT. \ Email: aaronson@csail.mit.edu. \ \ This material
is based upon work supported by the National Science Foundation under Grant
No. 0844626, as well as an NSF Waterman Award.}
\and Sean M.\ Carroll\thanks{Walter Burke Institute for Theoretical Physics,
Caltech. \ Email: seancarroll@gmail.com. This research is funded in part by
DOE grant DE-FG02-92ER40701, and by the Gordon and Betty Moore Foundation
through Grant 776 to the Caltech Moore Center for Theoretical Cosmology and
Physics.}
\and Lauren Ouellette\thanks{This work was done while the author was a student at
MIT. \ Email: louelle@alum.mit.edu.}}
\date{}
\maketitle

\begin{abstract}
In contrast to entropy, which increases monotonically, the \textquotedblleft
complexity\textquotedblright\ or \textquotedblleft
interestingness\textquotedblright\ of closed systems seems intuitively to
increase at first and then decrease as equilibrium is approached. \ For
example, our universe lacked complex structures at the Big Bang and will also
lack them after black holes evaporate and particles are dispersed. \ This
paper makes an initial attempt to quantify this pattern. \ As a model system,
we use a simple, two-dimensional cellular automaton that simulates the mixing
of two liquids (\textquotedblleft coffee\textquotedblright\ and
\textquotedblleft cream\textquotedblright). \ A plausible complexity measure
is then the Kolmogorov complexity of a coarse-grained approximation of the
automaton's state, which we dub the \textquotedblleft apparent
complexity.\textquotedblright\ We study this complexity measure, and show
analytically that it never becomes large when the liquid particles are
non-interacting. \ By contrast, when the particles \textit{do} interact, we
give numerical evidence that the complexity reaches a maximum comparable to
the \textquotedblleft coffee cup's\textquotedblright\ horizontal dimension.
\ We raise the problem of proving this behavior analytically.

\end{abstract}

\hfill{CALT-68-2927}

\section{Introduction\label{INTRO}}

Imagine a cup of coffee into which cream has just been poured. \ At first, the
coffee and cream are separated. \ Over time, the two liquids diffuse until
they are completely mixed. \ If we consider the coffee cup a closed system, we
can say that its \textit{entropy} is increasing over time, in accordance with
the second law of thermodynamics. \ At the beginning, when the liquids are
completely separated, the system is in a highly ordered, low-entropy state.
\ After time has passed and the liquids have completely mixed, all of the
initial structure is lost; the system has high entropy.

Just as we can reason about the disorder of the coffee cup system, we can also
consider its \textquotedblleft complexity.\textquotedblright\ \ Informally, by
complexity we mean the amount of information needed to describe everything
\textquotedblleft interesting\textquotedblright\ about the system. \ At first,
when the cream has just been poured into the coffee, it is easy to describe
the state of the cup: it contains a layer of cream on top of a layer of
coffee. \ Similarly, it is easy to describe the state of the cup after the
liquids have mixed: it contains a uniform mixture of cream and coffee.
\ However, when the cup is in an intermediate state---where the liquids are
mixed in some areas but not in others---it seems more difficult to describe
what the contents of the cup look like.

Thus, it appears that the coffee cup system starts out at a state of low
complexity, and that the complexity first increases and then decreases over
time. \ In fact, this rising-falling pattern of complexity seems to hold true
for many closed systems. \ One example is the universe itself. \ The universe
began near the Big Bang in a low-entropy, low-complexity state, characterized
macroscopically as a smooth, hot, rapidly expanding plasma. It is predicted to
end in the high-entropy, low-complexity state of heat death, after black holes
have evaporated and the acceleration of the universe has dispersed all of the
particles (about $10^{100}$ years from now). \ But in between, complex
structures such as planets, stars, and galaxies have developed. \ There is no
general principle that quantifies and explains the existence of
high-complexity states at intermediate times in closed systems. \ It is the
aim of this work to explore such a principle, both by developing a more formal
definition of \textquotedblleft complexity,\textquotedblright\ and by running
numerical experiments to measure the complexity of a simulated coffee cup
system. \ The idea that complexity first increases and then decreases in as
entropy increases in closed system has been suggested informally
\cite{gell-mann,carroll}, but as far as we know this is the first quantitative
exploration of the phenomenon.

\section{Background\label{BG}}

Before discussing how to define \textquotedblleft
complexity,\textquotedblright\ let's start with the simpler question of how to
define entropy in a discrete dynamical system. \ There are various definitions
of entropy that are useful in different contexts. \ Physicists distinguish
between the Boltzmann and Gibbs entropies of physical systems. \ (There is
also the phenomenological thermodynamic entropy and the quantum-mechanical
von~Neumann entropy, neither of which are relevant here.) \ The Boltzmann
entropy is an objective feature of a microstate, but depends on a choice of
coarse-graining. \ We imagine coarse-graining the space of microstates into
equivalence classes, so that each microstate $x_{a}$ is an element of a unique
macrostate $X_{A}$. \ The volume $W_{A}$ of the macrostate is just the number
of associated microstates $x_{a}\in X_{A}$. \ Then the Boltzmann entropy of a
microstate $x_{a}$ is the normalized logarithm of the volume of the associated
macrostate:
\begin{equation}
S_{\mathrm{Boltzmann}}(x_{a}):=k_{B}\log W_{A},
\end{equation}
where $k_{B}$ is Boltzmann's constant (which we can set equal to $1$). \ The
Boltzmann entropy is independent of our knowledge of the system; in
particular, it can be nonzero even when we know the exact microstate. \ The
Gibbs entropy (which was also studied by Boltzmann), in contrast, refers to a
distribution function $\rho(x)$ over the space of microstates, which can be
thought of as characterizing our ignorance of the exact state of the system.
\ It is given by
\begin{equation}
S_{\mathrm{Gibbs}}[\rho]:=-\sum_{x}\rho(x)\log\rho(x).
\end{equation}
In probability theory, communications, information theory, and other areas,
the Shannon entropy of a probability distribution $D=(p_{x})_{x}$ is the
expected number of random bits needed to output a sample from the
distribution:
\begin{equation}
H(D):=-\sum_{x}p_x\log p_{x}.
\end{equation}
We see that this is essentially equivalent to the Gibbs entropy, with a slight
change of notation and vocabulary.

Finally, in computability theory, the entropy of an $n$-bit string $x$ is
often identified with its \emph{Kolmogorov complexity} $K(x)$: the length of
the shortest computer program that outputs $x$.\footnote{A crucial fact
justifying this definition is that switching from one (Turing-universal)
programming language to another changes $K(x)$ by at most an additive
constant, independent of $x$. \ The reason is that in one language, we can
always just write a compiler or interpreter for another language, then specify
$x$ using the second language. \ Also, throughout this paper, we will assume
for convenience that the program receives $x$'s length $n$ as input. \ This
assumption can change $K(x)$ by at most an additive $O(\log n)$ term.}
\ Strings that are highly patterned---meaning low in disorder---can be
described by a short program that takes advantage of those patterns. \ For
example, a string consisting of $n$ ones could be output by a short program
which simply loops $n$ times, printing `1' each time. \ Conversely, strings
which have little regularity cannot be compressed in this way. \ For such
strings, the shortest program to output them might simply be one that
hard-codes the entire string.

Fortunately, these notions of entropy are closely related to each other, so
that one can often switch between them depending on convenience. \ The Gibbs
and Shannon entropies are clearly equivalent. \ The Boltzmann entropy is
equivalent to the Gibbs entropy under the assumption that the distribution
function is flat over microstates within the given macrostate, and zero
elsewhere--\textit{i.e.}, given the knowledge of the system we would actually
obtain via macroscopic observation. \ For a computable distribution $D$ over
$n$-bit strings, the Kolmogorov complexity of a string sampled from $D$ tends
to the entropy of $D$ \cite{livitanyi}. \ (Thus, the Kolmogorov complexity of
a sequence of random numbers will be very high, even though there is no
\textquotedblleft interesting structure\textquotedblright in it.)

Despite these formal connections, the three kinds of entropy are calculated in
very different ways. \ The Boltzmann entropy is well-defined once a specific
coarse-graining is chosen. \ To estimate the Shannon entropy $H (D)$ of a
distribution $D$ (which we will henceforth treat as identical to the
corresponding Gibbs entropy), one in general requires knowledge of the entire
distribution $D$, which could potentially require exponentially many samples
from $D$. \ At first glance, the Kolmogorov complexity $K (x)$ seems even
worse: it is well-known to be \emph{uncomputable} (in fact, computing $K (x)$
is equivalent to solving the halting problem). \ On the other hand, in
practice one can often estimate $K (x)$ reasonably well by the compressed file
size, when $x$ is fed to a standard compression program such as \texttt{gzip}.
\ And crucially, unlike Shannon entropy, Kolmogorov complexity is well-defined
even for an individual string $x$. \ For these reasons, we chose to use $K
(x)$ (or rather, a computable approximation to it) as our estimate of entropy.

Of course, \textit{none} of the three measures of entropy capture
\textquotedblleft complexity,\textquotedblright\ in the sense discussed in
Section \ref{INTRO}. \ Boltzmann entropy, Shannon entropy, and Kolmogorov
complexity are all maximized by \textquotedblleft random\textquotedblright\ or
\textquotedblleft generic\textquotedblright\ objects and distributions,
whereas a complexity measure should be low both for \textquotedblleft
simple\textquotedblright\ objects \textit{and} for \textquotedblleft
random\textquotedblright\ objects, and large only for \textquotedblleft
interesting\textquotedblright\ objects that are neither simple nor random.

This issue has been extensively discussed in the complex systems and
algorithmic information theory communities since the 1980s. \ We are aware of
four quantitative ideas for how to define \textquotedblleft
complexity\textquotedblright\ or \textquotedblleft
interestingness\textquotedblright\ as distinct from entropy. \ While the
definitions look extremely different, it will turn out happily that they are
all related to one another, much like with the different definitions of
entropy. \ Note that our primary interest here is in the complexity of a
configuration defined at a single moment in time. One may also associate
measures of complexity to dynamical processes, which for the most part we
won't discuss.

\subsection{Apparent Complexity\label{APPCOMP}}

The first notion, and arguably the one that matches our intuition most
directly, we call \textit{apparent complexity}.\footnote{Here we are using
\textquotedblleft apparent\textquotedblright\ in the sense of
\textquotedblleft directly perceivable,\textquotedblright\ without meaning to
imply any connotation of \textquotedblleft illusory.\textquotedblright} \ By
the apparent complexity of an object $x$, we mean $H\left(  f\left(  x\right)
\right)  $, where $H$ is any of the entropy measures discussed previously, and
$f$ is some \textquotedblleft denoising\textquotedblright\ or
\textquotedblleft smoothing\textquotedblright\ function---that is, a function
that attempts to remove the \textquotedblleft incidental\textquotedblright\ or
\textquotedblleft random\textquotedblright\ information in $x$, leaving only
the \textquotedblleft interesting, non-random\textquotedblright\ information.
\ For example, if $x$\ is a bitmap image, then $f\left(  x\right)  $\ might
simply be a blurred version of $x$.
%\footnote{Note that coarse entropy is
%\textit{different} from \textquotedblleft coarse-grained
%entropy\textquotedblright; the latter term is sometimes used in physics as a
%synonym for Boltzmann entropy. \ By contrast, coarse entropy is an attempt to
%quantify \textit{complexity}, and need \textit{not} monotonically increase in
%closed systems.}

Apparent complexity has two immense advantages. \ First, it is simple: it
directly captures the intuition that we want something \textit{like} entropy,
but that leaves out \textquotedblleft incidental\textquotedblright%
\ information. \ For example, while the Kolmogorov complexity of a random
sequence would be very large, the apparent complexity of the same sequence
would typically be quite small, since the smoothing procedure would average
out the random fluctuations. \ Second, we can plausibly hope to
\textit{compute} (or at least, approximate) apparent complexity: we need
\textquotedblleft merely\textquotedblright\ solve\ the problems of computing
$H$ and $f$. \ It's because of these advantages that the complexity measure we
ultimately adopt for our experiments will be an approximate variant of
apparent complexity.

On the other hand, apparent complexity also has a large disadvantage: namely,
the apparent arbitrariness in the choice of the denoising function $f$. \ Who
decides which information about $x$ is \textquotedblleft
interesting,\textquotedblright\ and which is \textquotedblleft
incidental\textquotedblright? \ Won't $f$ depend, not only on the type of
object under study (bitmap images, audio recordings, etc.), but even more
worryingly, on the prejudices of the investigator? \ For example, suppose we
choose $f$\ to blur out details of an image that are barely noticeable to the
human eye. \ Then will studying the time-evolution of $H\left(  f\left(
x\right)  \right)  $ tell us anything about $x$ itself, or only about various
quirks of the human visual system?

Fortunately, the apparent arbitrariness of the smoothing procedure is less of
a problem than might initially be imagined. \ It is very much like the need
for a coarse-graining on phase space when one defines the Boltzmann entropy.
\ In either case, these apparently-arbitrary choices are in fact
well-motivated on physical grounds. \ While one \textit{could} choose bizarre
non-local ways to coarse-grain or smooth a distribution, natural choices are
typically suggested by our physical ability to actually observe systems, as
well as knowledge of their dynamical properties (see for example
\cite{brun1999classical}). \ When deriving the equations of fluid dynamics
from kinetic theory, in principle one could choose to average over cells of
momentum space rather than in position space; but there is no physical reason
to do so, since interactions are local in position rather than momentum.
\ Likewise, when we observe configurations (whether with our eyes, or with
telescopes or microscopes), large-scale features are more easily discerned
than small-scale ones. \ (In field theory this feature is formalized by the
renormalization group.) \ It therefore makes sense to smooth configurations
over local regions in space.

Nevertheless, we would ideally like our complexity measure to \textit{tell us}
what the distinction between \textquotedblleft random\textquotedblright\ and
\textquotedblleft non-random\textquotedblright\ information consists of,
rather than having to decide ourselves on a case-by-case basis. \ This
motivates an examination of some alternative complexity measures.

\subsection{Sophistication\label{SOPH}}

The second notion---one that originates in work of Kolmogorov himself---is
\textit{sophistication}. \ Roughly speaking, sophistication seeks to
generalize Kolmogorov complexity to capture only the non-random information in
a string---while using Kolmogorov complexity itself to define what is meant by
\textquotedblleft non-random.\textquotedblright\ \ Given an $n$-bit string
$x$, let a \textit{model} for $x$ be a set $S\subseteq\left\{  0,1\right\}
^{n}$\ such that $x\in S$. \ Let $K\left(  S\right)  $ be the length of the
shortest program that enumerates the elements of $S$, in any order (crucially,
the program must halt when it is done enumerating the elements). \ Also, let
$K\left(  x|S\right)  $\ be the length of the shortest program that outputs
$x$ given as input a description of $x$. \ Then we can consider $x$ to be a
\textquotedblleft generic\textquotedblright\ element of $S$ if $K\left(
x|S\right)  \geq\log_{2}\left\vert S\right\vert -c$ for some small constant
$c$. \ This means intuitively that $S$ is a \textquotedblleft
maximal\textquotedblright\ model for $x$: one can summarize all the
interesting, non-random properties of $x$\ by simply saying that $x\in S$.

Now the $c$-\textit{sophistication} of $x$ or $\operatorname*{soph}_{c}\left(
x\right)  $, defined by Koppel \cite{koppel}, is the minimum of $K\left(
S\right)  $\ over all models $S$ for $x$ such that $K\left(  S\right)
+\log_{2}\left\vert S\right\vert \leq K\left(  x\right)  +c$. \ (The optimal
such $S$ is said to \textquotedblleft witness\textquotedblright%
\ $\operatorname*{soph}_{c}\left(  x\right)  $.) \ In words,
$\operatorname*{soph}_{c}\left(  x\right)  $\ is the smallest possible amount
of \textquotedblleft non-random\textquotedblright\ information in a program
for $x$ that consists of two parts---a \textquotedblleft
non-random\textquotedblright\ part (specifying $S$)\ and a \textquotedblleft
random\textquotedblright\ part (specifying $x$ within $S$)---assuming the
program is also near-minimal. \ We observe the following:

\begin{itemize}
\item[(i)] $\operatorname*{soph}_{c}\left(  x\right)  \leq K\left(  x\right)
+O\left(  1\right)  $, since we can always just take $S=\left\{  x\right\}  $
as our model for $x$.

\item[(ii)] \textit{Most} strings $x$ satisfy $\operatorname*{soph}_{c}\left(
x\right)  =O\left(  1\right)  $, since we can take $S=\left\{  0,1\right\}
^{n}$\ as our model for $x$.

\item[(iii)] If $S$ witnesses $\operatorname*{soph}_{c}\left(  x\right)  $,
then $\log_{2}\left\vert S\right\vert \leq K\left(  x\right)  -K\left(
S\right)  +c\leq K\left(  x|S\right)  +c$, meaning that $x$\ must be a
\textquotedblleft generic\textquotedblright\ element of $S$.
\end{itemize}

It can be shown (see G\'{a}cs, Tromp, and Vit\'{a}nyi \cite{gtv}\ or
Antunes\ and Fortnow \cite{soph}) that there do exist highly \textquotedblleft
sophisticated\textquotedblright\ strings $x$, which satisfy
$\operatorname*{soph}_{c}\left(  x\right)  \geq n-c-O\left(  \log n\right)  $.
\ Interestingly, the proof of that result makes essential use of the
assumption that the program for $S$ halts, after it has finished listing $S$'s
elements. \ If we dropped that assumption, then we could always achieve
$K\left(  S\right)  =O\left(  \log n\right)  $, by simply taking $S$ to be the
set of all $y\in\left\{  0,1\right\}  ^{n}$ such that $K\left(  y\right)  \leq
K\left(  x\right)  $, and enumerating those $y$'s in a dovetailing fashion.

Recently, Mota et al.\ \cite{maas}\ studied a natural variant of
sophistication, in which one only demands that $S$\ be a maximal model for $x$
(i.e., that $K\left(  x|S\right)  \geq\log_{2}\left\vert S\right\vert -c$),
and not that $S$ also lead to a near-optimal two-part program for $x$ (i.e.,
that $K\left(  S\right)  +\log_{2}\left\vert S\right\vert \leq K\left(
x\right)  +c$). \ More formally, Mota et al.\ define the \textit{na\"{\i}ve
}$c$\textit{-sophistication} of $x$, or $\operatorname*{nsoph}_{c}\left(
x\right)  $, to be the minimum of $K\left(  S\right)  $\ over all models $S$
for $x$ such that $K\left(  x|S\right)  \geq\log_{2}\left\vert S\right\vert
-c$. \ By point (iii) above, it is clear that $\operatorname*{nsoph}%
_{c}\left(  x\right)  \leq\operatorname*{soph}_{c}\left(  x\right)  $.
\ \textit{A priori}, $\operatorname*{nsoph}_{c}\left(  x\right)  $\ could be
much smaller $\operatorname*{soph}_{c}\left(  x\right)  $, thereby leading to
two different sophistication notions. \ However, it follows from an important
2004 result of Vereshchagin and Vit\'{a}nyi \cite{vereshvitanyi} that
$\operatorname*{soph}_{c+O\left(  \log n\right)  }\left(  x\right)
\leq\operatorname*{nsoph}_{c}\left(  x\right)  $\ for all $x$, and hence the
two notions are basically equivalent.

Sophistication is sometimes criticized for being \textquotedblleft
brittle\textquotedblright: it is known that increasing the parameter $c$\ only
slightly can cause $\operatorname*{soph}_{c}\left(  x\right)  $\ and
$\operatorname*{nsoph}_{c}\left(  x\right)  $\ to fall drastically, say from
$n-O\left(  \log n\right)  $\ to $O\left(  1\right)  $. \ However, a simple
fix to that problem is to consider the quantities $\min_{c}\left\{
c+\operatorname*{soph}_{c}\left(  x\right)  \right\}  $\ and $\min_{c}\left\{
c+\operatorname*{nsoph}_{c}\left(  x\right)  \right\}  $. \ Those are known,
respectively, as the \textit{coarse sophistication} $\operatorname*{csoph}%
\left(  x\right)  $\ and \textit{na\"{\i}ve coarse sophistication}
$\operatorname*{ncsoph}\left(  x\right)  $, and they satisfy
$\operatorname*{ncsoph}\left(  x\right)  \leq\operatorname*{csoph}\left(
x\right)  \leq\operatorname*{ncsoph}\left(  x\right)  +O\left(  \log n\right)
$.

The advantage of sophistication is that it captures, more cleanly than any
other measure, what exactly we mean by \textquotedblleft
interesting\textquotedblright\ versus \textquotedblleft
random\textquotedblright\ information. \ Unlike with apparent complexity, with
sophistication there's no need to specify a smoothing function $f$, with the
arbitrariness that seems to entail. \ Instead, if one likes, the definition of
sophistication picks out a smoothing function for us: namely, whatever
function maps $x$\ to its corresponding model $S$.

Unfortunately, this conceptual benefit\ comes at a huge computational price.
\ Just as $K\left(  x\right)  $\ is uncomputable, so one can show that the
sophistication measures are uncomputable as well. \ But with $K\left(
x\right)  $, at least we can get better and better upper bounds, by finding
smaller and smaller compressed representations for $x$. \ By contrast, even to
\textit{approximate} sophistication requires solving two coupled optimization
problems: firstly over possible models $S$, and secondly over possible ways to
specify $x$ given $S$.

A second disadvantage of sophistication is that, while there \textit{are}
highly-sophisticated strings, the only known way to produce such a string
(even probabilistically) is via a somewhat-exotic diagonalization argument.
\ (By contrast, for \textquotedblleft reasonable\textquotedblright\ choices of
smoothing function $f$, one can easily generate $x$ for which the apparent
complexity $H\left(  f\left(  x\right)  \right)  $\ is large.) \ Furthermore,
this is not an accident, but an unavoidable consequence of sophistication's
generality. \ To see this, consider any short probabilistic program $P$: for
example, the coffee automaton that we will study in this paper, which has a
simple initial state and a simple probabilistic evolution rule. \ Then we
claim that \textit{with overwhelming probability, }$P$\textit{'s output }%
$x$\textit{\ must have low sophistication.} \ For as the model $S$, one can
take the set of all possible outputs $y$\ of $P$ such that $\Pr\left[
y\right]  \approx\Pr\left[  x\right]  $. \ This $S$\ takes only $O\left(  \log
n\right)  $\ bits to describe (plus $O\left(  1\right)  $\ bits for $P$
itself), and clearly $K\left(  x|S\right)  \geq\log_{2}\left\vert S\right\vert
-c$\ with high probability over $x$.

For this reason, sophistication as defined above seems irrelevant to the
coffee cup or other physical systems:\ it simply never becomes large for such
systems! \ On the other hand, note that the two drawbacks of sophistication
might \textquotedblleft cancel each other out\textquotedblright\ if we
consider \textit{resource-bounded} versions of sophistication: that is,
versions where we impose constraints (possibly severe constraints) on both the
program for generating $S$, and the program for generating $x$ given $S$.
\ Not only does the above argument fail for resource-bounded versions of
sophistication, but those versions are the only ones we can hope to compute
anyway! \ With Kolmogorov complexity, we're forced to consider proxies (such
as \texttt{gzip} file size) mostly just because $K\left(  x\right)  $\ itself
is uncomputable. \ By contrast, even if we \textit{could} compute
$\operatorname*{soph}_{c}\left(  x\right)  $\ perfectly, it would never become
large for the systems that interest us here.

\subsection{Logical Depth\label{DEPTH}}

A third notion, introduced by Bennett \cite{bennett:depth}, is \textit{logical
depth}. \ Roughly speaking, the logical depth of a string $x$ is the amount of
time taken by the shortest program that outputs $x$. \ (Actually, to avoid the
problem of \textquotedblleft brittleness,\textquotedblright\ one typically
considers something like the minimum amount of time taken by any program that
outputs $x$ and whose length is at most $K\left(  x\right)  +c$, for some
constant \textquotedblleft fudge factor\textquotedblright\ $c$. \ This is
closely analogous to what is done for sophistication.)

The basic idea here is that, both for simple strings and for random ones, the
shortest program will \textit{also} probably run in nearly linear time. \ By
contrast, one can show that there exist \textquotedblleft
deep\textquotedblright\ strings, which can be generated by short programs but
only after large amounts of time.

Like sophistication, logical depth tries to probe the internal structure of a
minimal program for $x$---and in particular, to distinguish between the
\textquotedblleft interesting code\textquotedblright\ in that program\ and the
\textquotedblleft boring data\textquotedblright\ on which the code acts. \ The
difference is that, rather than trying to measure the \textit{size} of the
\textquotedblleft interesting code,\textquotedblright\ one examines how long
it takes to run.

Bennett \cite{bennett:depth} has advocated logical depth as a complexity
measure, on the grounds that logical depth encodes the \textquotedblleft
amount of computational effort\textquotedblright\ used to produce $x$,
according to the \textquotedblleft most probable\textquotedblright\ (i.e.,
lowest Kolmogorov complexity) hypothesis about how $x$ was generated. \ On the
other hand, an obvious disadvantage of logical depth is that it's even less
clear how to estimate it in practice than was the case for sophistication.

A second objection to logical depth is that even short, fast programs can be
extremely \textquotedblleft complicated\textquotedblright\ in their behavior
(as evidenced, for example, by cellular automata such as Conway's Game of
Life). \ Generating what many people would regard as a visually complex
pattern---and what \textit{we} would regard as a complex, milk-tendril-filled
state, in the coffee-cup system---simply need not take a long time! \ For this
reason, one might be uneasy with the use of running time as a proxy for complexity.

\subsection{Light-Cone Complexity\label{LCC}}

The final complexity measure we consider was proposed by Shalizi, Shalizi, and
Haslinger \cite{ssh}; we call it \textit{light-cone complexity}. \ In contrast
to the previous measures, light-cone complexity does not even try to define
the \textquotedblleft complexity\textquotedblright\ of a string $x$, given
only $x$ itself. \ Instead, the definition of light-cone complexity assumes a
\textit{causal structure}: that is, a collection of spacetime points
$\mathcal{A}$ (assumed to be fixed), together with a transitive, cycle-free
binary relation indicating which points $a\in\mathcal{A}$\ are to the
\textquotedblleft future\textquotedblright\ of which other points in
$\mathcal{A}$. \ The set of all points $b\in\mathcal{A}$\ to $a$'s future is
called $a$'s \textit{future light-cone}, and is denoted $F\left(  a\right)  $.
\ The set of all points $b\in\mathcal{A}$\ to $a$'s past (that is, such that
$a$ is to $b$'s future)\ is called $a$'s \textit{past light-cone}, and is
denoted $P\left(  a\right)  $. \ For example, if we were studying the
evolution of a $1$-dimensional cellular automaton, then $\mathcal{A}$\ would
consist of all ordered pairs $\left(  x,t\right)  $\ (where $x$ is position
and $t$ is time), and we would have%
\begin{align}
F\left(  x,t\right)   &  =\left\{  \left(  y,u\right)  :u>t,~~\left\vert
x-y\right\vert \leq u-t\right\}  ,\\
P\left(  x,t\right)   &  =\left\{  \left(  y,u\right)  :u<t,~~\left\vert
x-y\right\vert \leq t-u\right\}  .
\end{align}

Now given a spacetime point $a\in\mathcal{A}$, let $V_{a}$\ be the actual
value assumed by the finite automaton at $a$\ (for example, \textquotedblleft
alive\textquotedblright\ or \textquotedblleft dead,\textquotedblright\ were we
discussing Conway's Game of Life or some other $2$-state system). \ In
general, the finite automaton might be probabilistic, in which case $V_{a}%
$\ is a random variable, with a Shannon entropy $H\left(  V_{a}\right)  $\ and
so forth. \ Also, given a set $S\subseteq\mathcal{A}$, let $V_{S}:=\left(
V_{a}\right)  _{a\in S}$ be a complete description of the values at
\textit{all} points in $S$. \ Then the light-cone complexity at a point
$a\in\mathcal{A}$, or $\operatorname*{LCC}\left(  a\right)  $, can be defined
as follows:%
\begin{align}
\operatorname*{LCC}\left(  a\right)   &  =I\left(  V_{P\left(  a\right)
}:V_{F\left(  a\right)  }\right)  \\
&  =H\left(  V_{P\left(  a\right)  }\right)  +H\left(  V_{F\left(  a\right)
}\right)  -H\left(  V_{P\left(  a\right)  },V_{F\left(  a\right)  }\right)  .
\end{align}
In other words, $\operatorname*{LCC}\left(  a\right)  $\ is the \textit{mutual
information} between $a$'s past and future light-cones: the number of bits
about $a$'s future that are encoded by its past. \ If we want the light-cone
complexity of (say) an entire spatial slice, we could then take the sum of
$\operatorname*{LCC}\left(  a\right)  $\ over all $a$ in that slice, or some
other combination.

The intuition here is that, if the cellular automaton dynamics are
\textquotedblleft too simple,\textquotedblright\ then $\operatorname*{LCC}%
\left(  a\right)  $\ will be small simply because $H\left(  V_{P\left(
a\right)  }\right)  $\ and $H\left(  V_{F\left(  a\right)  }\right)  $\ are
both small. \ Conversely, if the dynamics are \textquotedblleft too
random,\textquotedblright\ then $\operatorname*{LCC}\left(  a\right)  $\ will
be small because $H\left(  V_{P\left(  a\right)  },V_{F\left(  a\right)
}\right)  \approx H\left(  V_{P\left(  a\right)  }\right)  +H\left(
V_{F\left(  a\right)  }\right)  $: although the past and future light-cones
both have plenty of entropy, they are uncorrelated, so that knowledge of the
past is of barely any use in predicting the future. \ Only in an intermediate
regime, where there are interesting \textit{non}-random dynamics, should there
be substantial uncertainty about $V_{F\left(  a\right)  }$\ that can be
reduced by knowing $V_{P\left(  a\right)  }$.

As Shalizi et al.\ \cite{ssh}\ point out, a major advantage of light-cone
complexity, compared to sophistication, logical depth, and so on, is that
light-cone complexity has a clear \textquotedblleft operational
meaning\textquotedblright: it is easy to state the question that light-cone
complexity is answering. \ That question is the following: \textquotedblleft
how much could I possibly predict about the configurations in $a$'s future,
given complete information about $a$'s past?\textquotedblright\ \ The reason
to focus on light-cones, rather than other sets of points, is that the
light-cones are automatically determined once we know the causal structure:
there seems to be little arbitrariness about them.

On the other hand, depending on the application, an obvious drawback of
light-cone complexity is that it can't tell us the \textquotedblleft
inherent\textquotedblright\ complexity of an object $x$, without knowing about
$x$'s past and future. \ If we wanted to use a complexity measure to make
\textit{inferences} about $x$'s past and future, this might be seen as
question-begging. \ A less obvious drawback arises if we consider a dynamical
system that changes slowly with time: for example, a version of the coffee
automaton where just a single cream particle is randomly moved at each time
step. \ Consider such a system in its \textquotedblleft late\textquotedblright%
\ stages: that is, after the coffee and cream have fully mixed. \ Even then,
Shalizi et al.'s $\operatorname*{LCC}\left(  a\right)  $\ measure will remain
large, but not for any \textquotedblleft interesting\textquotedblright%
\ reason: only because $a$'s past light-cone will contain almost the same
(random) information as its future light-cone, out to a very large distance!
\ Thus, $\operatorname*{LCC}$\ seems to give an intuitively wrong answer in
these cases (though no doubt one could address the problem by redefining
$\operatorname*{LCC}$\ in some suitable way).

The computational situation for $\operatorname*{LCC}$\ seems neither better
nor worse to us than that for (say) apparent complexity or resource-bounded
sophistication. \ Since the light-cones $P\left(  a\right)  $\ and $V\left(
a\right)  $\ are formally infinite, a first step in estimating
$\operatorname*{LCC}\left(  a\right)  $---as Shalizi et al.\ point out---is to
impose some finite cutoff $t$ on the number of steps into $a$'s past and
future one is willing to look. \ Even then, one needs to estimate the mutual
information $I\left(  V_{P_{t}\left(  a\right)  }:V_{F_{t}\left(  a\right)
}\right)  $\ between the truncated light-cones $P_{t}\left(  a\right)  $\ and
$F_{t}\left(  a\right)  $, a problem that na\"{\i}vely requires a number of
samples exponential in $t$. \ One could address this problem by simply taking
$t$ extremely small (Shalizi et al.\ set $t=1$). \ Alternatively, if a large
$t$ was needed, one could use the same Kolmogorov-complexity-based approach
that we adopt in this paper for apparent complexity. \ That is, one first
replaces the mutual information by the mutual \textit{algorithmic} information%
\begin{equation}
K\left(  V_{P_{t}\left(  a\right)  }:V_{F_{t}\left(  a\right)  }\right)
=K\left(  V_{P_{t}\left(  a\right)  }\right)  +K\left(  V_{F_{t}\left(
a\right)  }\right)  -K\left(  V_{P_{t}\left(  a\right)  },V_{F_{t}\left(
a\right)  }\right)  ,
\end{equation}
and then estimates $K\left(  x\right)  $\ using some computable proxy such as
\texttt{gzip}\ file size.

\subsection{Synthesis\label{SYNTHESIS}}

It seems like we have a bestiary of different complexity notions.
\ Fortunately, the four notions discussed above can all be related to each
other; let us discuss how.

First, one can view apparent complexity as a kind of \textquotedblleft
resource-bounded\textquotedblright\ sophistication. \ To see this, let $f$ be
any smoothing function. \ Then $K\left(  f\left(  x\right)  \right)  $, the
Kolmogorov complexity of $f\left(  x\right)  $, is essentially equal to
$K\left(  S_{f,x}\right)  $, where%
\begin{equation}
S_{f,x}:=\left\{  y:f\left(  y\right)  =f\left(  x\right)  \right\}
.\label{sfx}%
\end{equation}
Thus, if instead of minimizing over \textit{all} models $S$ for $x$ that
satisfy some condition, we consider only the particular model $S_{f,x}$ above,
then sophistication reduces to apparent complexity. \ Note that this argument
establishes neither that apparent complexity is an upper bound on
sophistication, nor that it's a lower bound. \ Apparent complexity could be
larger, if the minimization found some model $S$\ for $x$ with $K\left(
S\right)  \ll K\left(  S_{f,x}\right)  $. \ But conversely, sophistication
could also be larger, if the model $S_{f,x}$\ happened to satisfy $K\left(
x|S_{f,x}\right)  \ll\log_{2}\left\vert S_{f,x}\right\vert $ (that is, $x$ was
a highly \textquotedblleft non-generic\textquotedblright\ element of $S_{f,x}$).

Second, Antunes\ and Fortnow \cite{soph} proved a close relation between
coarse sophistication and a version of logical depth. \ Specifically, the Busy
Beaver function, $\operatorname*{BB}\left(  k\right)  $, is defined as the
maximum number of steps for which a $k$-bit program can run before halting
when given a blank input. \ Then given a string $x$, Antunes and Fortnow
\cite{soph}\ define the \textit{Busy Beaver computational depth}
$\operatorname*{depth}_{BB}\left(  x\right)  $ to be the minimum, over all
programs $p$ that output a model $S$ for $x$ in $BB\left(  k\right)  $\ steps
or fewer, of $\left\vert p\right\vert +k-K\left(  x\right)  $. They then prove
the striking result that $\operatorname*{csoph}$ and $\operatorname*{depth}%
_{BB}$ are essentially equivalent: for all $x\in\left\{  0,1\right\}  ^{n}$,%
\begin{equation}
\left\vert \operatorname*{csoph}\left(  x\right)  -\operatorname*{depth}%
\nolimits_{BB}\left(  x\right)  \right\vert =O\left(  \log n\right)  .
\end{equation}

Third, while light-cone complexity is rather different from the other three
measures (due to its taking as input an entire causal history), it can be
loosely related to apparent complexity as follows. \ If $\operatorname*{LCC}%
\left(  a\right)  $\ is large, then the region around $a$ must contain large
\textquotedblleft contingent structures\textquotedblright: structures that are
useful for predicting future evolution, but that might have been different in
a different run of the automaton. \ And one might expect those structures to
lead to a large apparent complexity in $a$'s vicinity. \ Conversely, if the
apparent complexity is large, then one expects contingent structures (such as
milk tendrils, in the coffee automaton), which could then lead to nontrivial
mutual information between $a$'s past and future light-cones.

Having described four complexity measures, their advantages and disadvantages,
and their relationships to each other, we now face the question of which
measure to use for our experiment. \ While it would be interesting to study
the rise and fall of light-cone complexity in future work, here we decided to
restrict ourselves to complexity measures that are functions of the current
state. \ That leaves apparent complexity, sophistication, and logical depth
(and various approximations, resource-bounded versions, and hybrids thereof).

Ultimately, we decided on a type of apparent complexity. \ Our reason was
simple: because even after allowing resource bounds, \textit{we did not know
of any efficient way to approximate sophistication or logical depth}. \ In
more detail, given a bitmap image $x$ of a coffee cup, our approach first
\textquotedblleft smears $x$ out\textquotedblright\ using a smoothing function
$f$, then uses the \texttt{gzip}\ file size of $f\left(  x\right)  $\ as an
upper bound on the Kolmogorov complexity $K\left(  f\left(  x\right)  \right)
$\ (which, in turn, is a proxy for the Shannon entropy $H\left(  f\left(
x\right)  \right)  $\ of $f\left(  x\right)  $\ considered as a random
variable). \ There are a few technical problems that arise when implementing
this approach (notably, the problem of \textquotedblleft border pixel
artifacts\textquotedblright). \ We discuss those problems and our solutions to
them in Section \ref{ALGS}.

Happily, as discussed earlier in this section, our apparent complexity measure
can be related to the other measures. \ For example, apparent complexity can
be seen as an \textit{extremely} resource-bounded variant of sophistication,
with the set $S_{f,x}$\ of equation (\ref{sfx}) playing the role of the model
$S$. \ As discussed in Section \ref{APPCOMP}, one might object to our apparent
complexity measure on the grounds that our smoothing function $f$ is
\textquotedblleft arbitrary,\textquotedblright\ that we had no principled
reason to choose it rather than some other function. \ Interestingly, though,
one can answer that objection by taking inspiration from light-cone
complexity. \ Our smoothing function $f$ will \textit{not} be completely
arbitrary, for the simple reason that the regions over which we
coarse-grain---namely, squares of contiguous cells---will correspond to the
coffee automaton's causal structure.\footnote{Technically, if we wanted to
follow the causal structure, then we should have used \textit{diamonds} of
continguous cells rather than squares. \ But this difference is presumably
insignificant.}

\section{The Coffee Automaton\label{SETUP}}

The coffee cup system that we use as our model is a simple stochastic cellular
automaton. \ A two-dimensional array of bits describes the system's state,
with ones representing particles of cream, and zeros representing particles of
coffee. \ The cellular automaton implementation used for this project is
written in Python; source code is available for download.\footnote{At
\texttt{www.scottaaronson.com/coffee\_automaton.zip}}

The automaton begins in a state in which the top half of the cells are filled
with ones, and the bottom half is filled with zeros. \ At each time step, the
values in the cells change according to a particular transition rule. \ We
consider two different models of the coffee cup system, each having its own
transition rule.

\subsection{Interacting Model\label{INT}}

In the interacting model of the coffee cup system, only one particle may
occupy each cell in the state array. \ The transition rule for this model is
as follows: at each time step, one pair of horizontally or vertically
adjacent, differing particles is selected, and the particles' positions are
swapped. \ This model is \textit{interacting} in the sense that the presence
of a particle in a cell prevents another particle from entering that cell.
\ The movements of particles in this model are not independent of one another.

This model reflects the physical principle that two pieces of matter may not
occupy the same space at the same time. \ However, the interactions between
particles that make this model more realistic also make it harder to reason
about theoretically.

\subsection{Non-Interacting Model\label{NONINT}}

In the non-interacting model of the coffee cup system, any number of cream
particles may occupy a single cell in the state array. \ Coffee particles are
not considered important in this model; they are simply considered a
background through which the cream particles move. \ The transition rule for
this model is as follows: at each time step, each cream particle in the system
moves one step in a randomly chosen direction. \ This model is
\textit{non-interacting} in that the location of each cream particle is
independent of all the others. \ The presence of a cream particle in a
particular cell does not prevent another cream particle from also moving into
that cell.

We consider this model because it is easier to understand theoretically.
\ Since the particles in the system do not interact, each particle can be
considered to be taking an independent random walk. \ The dynamics of random
walks are well-understood, so it is easy to make theoretical predictions about
this model (see Appendix \ref{ANALYSIS}) and compare them to the experimental results.

\section{Approximating Apparent Complexity\label{ALGS}}

While Kolmogorov complexity and sophistication are useful theoretical notions
to model our ideas of entropy and complexity, they cannot be directly applied
in numerical simulations, because they are both uncomputable. \ As such, while
we use these concepts as a theoretical foundation, we need to develop
algorithms that attempt to approximate them.

Evans et al.\ \cite{oscr} propose an algorithm, called the optimal symbol
compression ratio (OSCR) algorithm, which directly estimates Kolmogorov
complexity and sophistication. \ Given an input string $x$, the OSCR algorithm
produces a two-part code. \ The first part is a codebook, which maps symbols
chosen from the original input string to new symbols in the encoded string.
\ The second part of the code is the input string, encoded using the symbols
in this codebook. \ The goal of OSCR is to select which symbols to put in the
codebook such that the total size of the output---codebook size plus encoded
string size---is minimized. \ The optimal codebook size for $x$ is an estimate
of $K\left(  S\right)  $, the sophistication of $x$. \ The optimal total size
of the output for $x$ is called the minimum description length (MDL) of the
string, and is an estimate of $K\left(  x\right)  $.

The OSCR approach seems promising because of its direct relationship to the
functions we are interested in approximating. \ However, we implemented a
version of this algorithm, and we found that our implementation does not
perform well in compressing the automaton data. \ The output of the algorithm
is noisy, and there is no obvious trend in either the entropy or complexity
estimates. \ We conjecture that the noise is present because this compression
method, unlike others we consider, does not take into account the
two-dimensionality of the automaton state.

An alternative metric adopts the idea of coarse-graining. \ Here we aim to
describe a system's state on a macroscopic scale---for example, the coffee cup
as it would be seen by a human observer from a few feet away---by smoothing
the state, averaging nearby values together. \ Conceptually, for an automaton
state represented by a string $x$, its coarse-grained version is analogous to
a typical set $S$ which contains $x$. \ The coarse-grained state describes the
high-level, \textquotedblleft non-random\textquotedblright\ features of
$x$---features which it has in common with all other states from which the
same coarse-grained representation could be derived. \ Thus, the descriptive
size of the coarse-grained state can be used as an estimate for the state's
sophistication, $K\left(  S\right)  $. \ To estimate the descriptive size of
the coarse-grained state, we compress it using a general file compression
program, such as \texttt{gzip} or \texttt{bzip}. \ Shalizi \cite{shalizi}
objects to the use of such compression programs, claiming that they do not
provide consistently accurate entropy estimates and that they are too slow.
\ In our experiments, we have not seen either of these problems; our
simulations run in a reasonable amount of time and produce quite consistent
entropy estimates (see, for instance, Figure \ref{multicompress}). \ We
therefore use such compression programs throughout, though we consider
alternative approaches in Section \ref{CONC}.

Having defined the notion of coarse-graining, we can then define a two-part
code based on it. \ If the first part of the code---the typical set---is the
coarse-grained state, then the second part is $K\left(  x|S\right)  $, the
information needed to reconstruct the fine-grained state given the
coarse-grained version. \ The total compressed size of both parts of the code
is an estimate of the Kolmogorov complexity of the state, $K\left(  x\right)
$.

We attempted to implement such a two-part code, in which the second part was a
diff between the fine-grained and coarse-grained states. \ The fine-grained
state, $x$, could be uniquely reconstructed from the coarse-grained array and
the diff. \ In our implementation of this two-part code, our estimate of
$K\left(  x|S\right)  $ suffered from artifacts due to the way the diff was
represented. \ However, defining a two-part code based on coarse-graining is
possible in general.

In light of the artifacts produced by our implementation of the two-part code,
we chose to pursue a more direct approach using coarse-graining. \ We
continued to use the compressed size of the coarse-grained state as an
approximation of $K\left(  S\right)  $. \ However, instead of approximating
$K\left(  x|S\right)  $ and using $K\left(  S\right)  +K\left(  x|S\right)  $
as an estimate of $K\left(  x\right)  $, we approximated $K\left(  x\right)  $
directly, by measuring the compressed size of the fine-grained array. \ This
approach avoided the artifacts of the diff-based code, and was used to
generate the results reported here.

\section{Coarse-Graining Experiment\label{CG}}

\subsection{Method\label{CGMETHOD}}

To derive a coarse-grained version of the automaton state from its original,
fine-grained version, we construct a new array in which the value of each cell
is the average of the values of the nearby cells in the fine-grained array.
\ We define \textquotedblleft nearby\textquotedblright\ cells as those within
a $g\times g$ square centered at the cell in question. The value of $g$ is
called the grain size, and here is selected experimentally. \ This procedure
is illustrated in Figure \ref{finecoarsefig}.

\begin{figure}\begin{center}%
\begin{tabular}
[c]{cc}%
\textbf{Fine-Grained} & \textbf{Coarse-Grained}\\%
\begin{tabular}
[c]{|c|c|c|c|c|}\hline
0 & 1 & 0 & 0 & 1\\\hline
0 & \cellcolor{silver}1 & \cellcolor{silver}1 & \cellcolor{silver}0 &
1\\\hline
0 & \cellcolor{silver}1 & \cellcolor{silver}0 & \cellcolor{silver}1 &
0\\\hline
1 & \cellcolor{silver}0 & \cellcolor{silver}1 & \cellcolor{silver}1 &
1\\\hline
1 & 0 & 1 & 0 & 0\\\hline
\end{tabular}
&
\begin{tabular}
[c]{|c|c|c|c|c|}\hline
0.5 & 0.5 & 0.5 & 0.5 & 0.5\\\hline
0.5 & $0.\overline{4}$ & $0.\overline{4}$ & $0.\overline{4}$ & 0.5\\\hline
0.5 & $0.\overline{5}$ & \cellcolor{silver}$0.\overline{6}$ & $0.\overline{6}$
& $0.\overline{3}$\\\hline
0.5 & $0.\overline{5}$ & $0.\overline{5}$ & $0.\overline{5}$ & 0.5\\\hline
0.5 & $0.\overline{6}$ & 0.5 & $0.\overline{6}$ & 0.5\\\hline
\end{tabular}
\\
&
\end{tabular}
\end{center}
\caption{Illustration of the construction of the coarse-grained array, using
an example grain size of $3$. \ The values of the shaded cells at left are
averaged to produce the value of the shaded cell at right.}%
\label{finecoarsefig}%
\end{figure}

Given this array of averages, we then threshold its floating-point values into
three buckets. \ Visually, these buckets represent areas which are mostly
coffee (values close to $0$), mostly cream (values close to $1$), or mixed
(values close to $0.5$). \ The estimated complexity of the state, $K\left(
S\right)  $, is the file size of the thresholded, coarse-grained array after
compression. \ Analogously, the estimated entropy of the automaton state is
the compressed file size of the fine-grained array.

\subsection{Results and Analysis\label{CGRESULTS}}

Results from simulation of the automaton using the coarse-graining metric are
shown in Figure \ref{initialgraph}.

\begin{figure}[h]
\begin{center}
\fbox{\includegraphics[height=150px, width=200px]{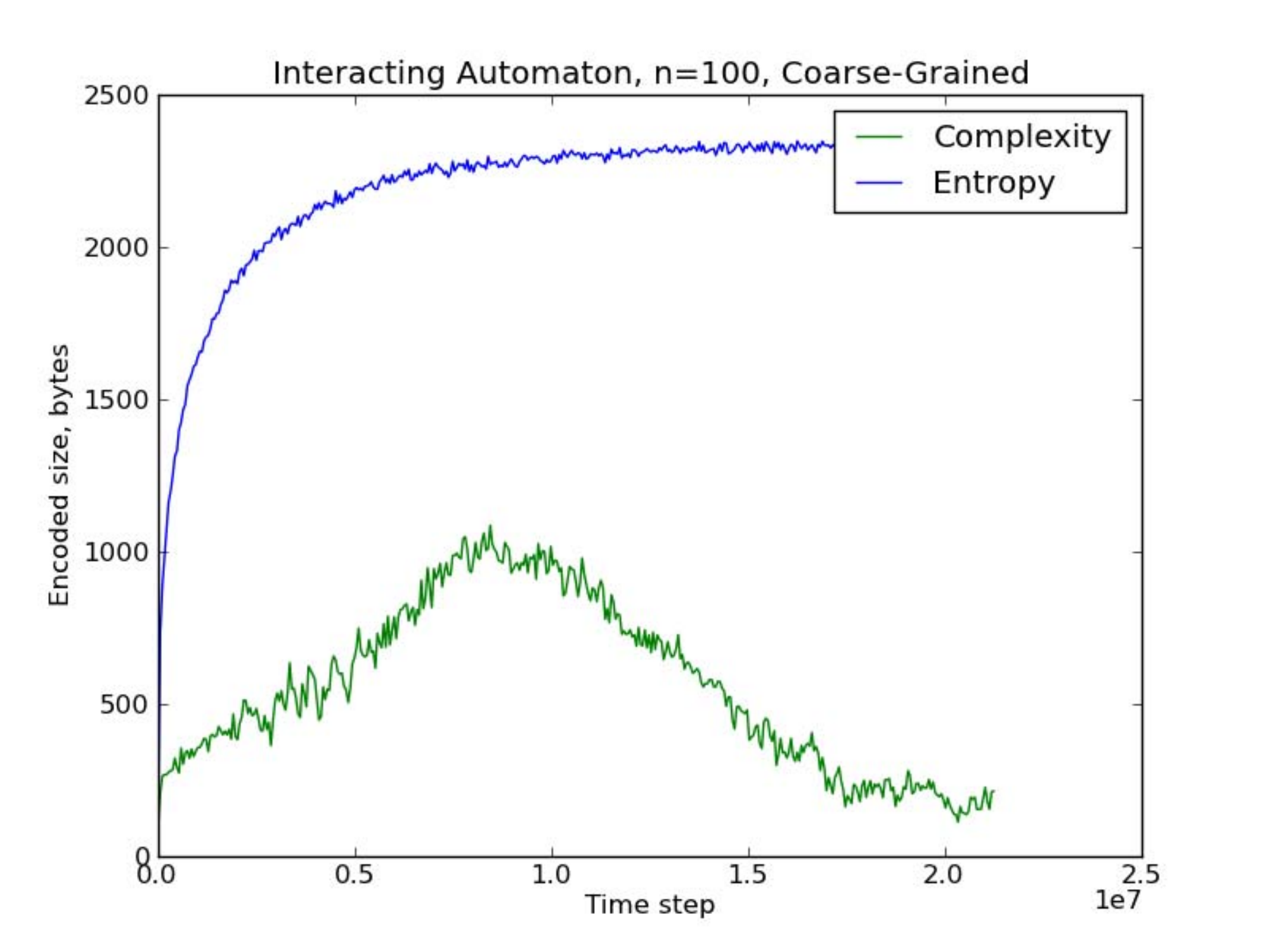}}
\fbox{\includegraphics[height=150px, width=200px]{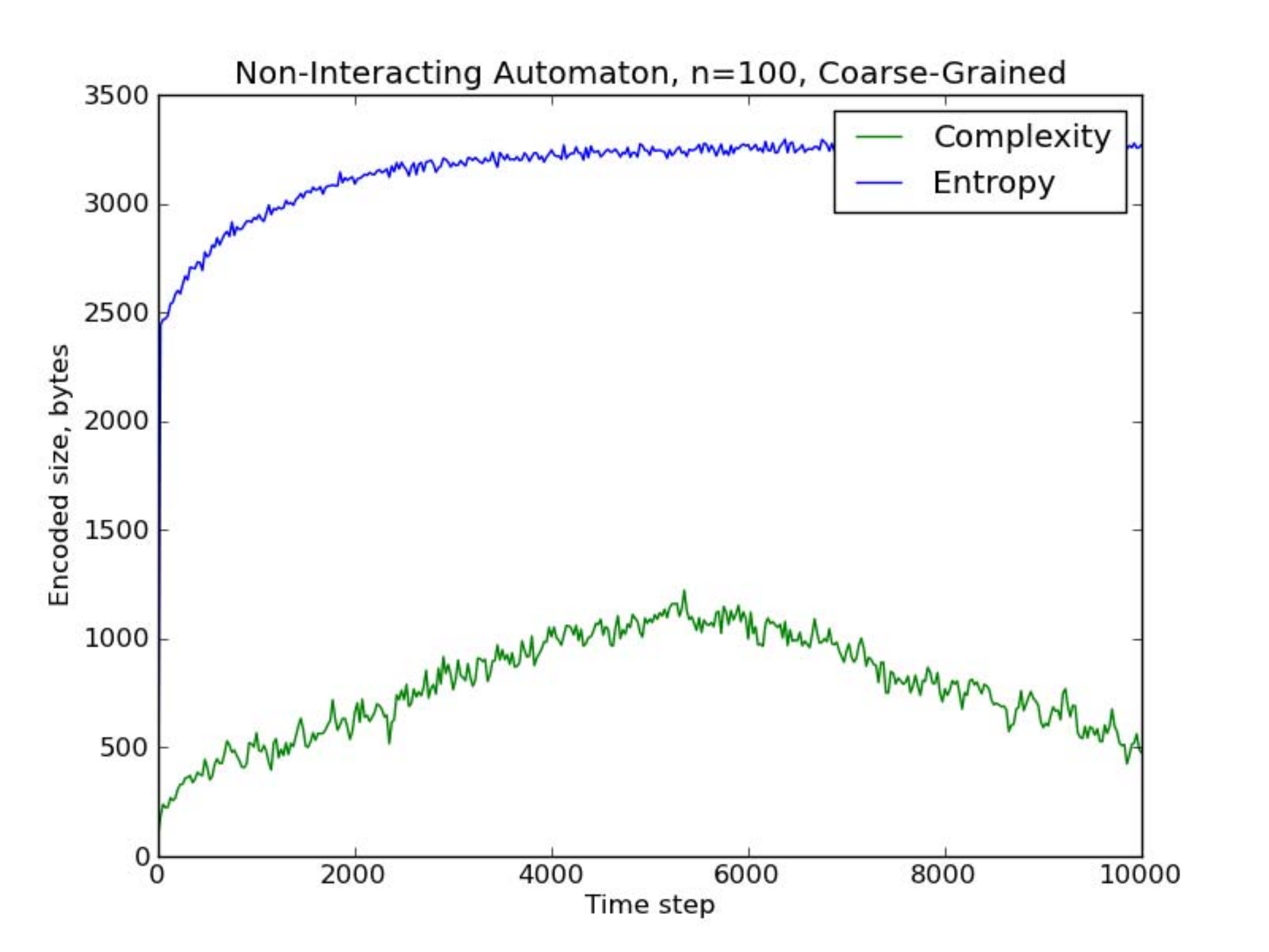}}%
\caption{The estimated entropy and complexity of an automaton using the
coarse-graining metric. Results for the interacting model are shown at left,
and results for the non-interacting model are at right.}%
\label{initialgraph}%
\end{center}
\end{figure}

Both the interacting and non-interacting models show the predicted increasing,
then decreasing pattern of complexity. \ Both models also have an increasing
entropy pattern, which is expected due to the second law of thermodynamics.
\ The initial spike in entropy for the non-interacting automaton can be
explained by the fact that all of the particles can move simultaneously after
the first time step. \ Thus, the number of bits needed to represent the state
of the non-interacting automaton jumps after the first time step. \ With the
interacting automaton, by contrast, particles far from the coffee-cream border
cannot move until particles closer to the border have moved, so there is less
change in the automaton at each time step. \ Therefore, the estimated entropy
of this model is predictably more continuous throughout.

A visualization of the automaton's changing state over time is provided in
Figures \ref{interact} and \ref{noninteract}. \ This visualization is
generated by converting each cell's value to a grayscale color value; lighter
colors correspond to larger values. \ Visually, the fine-grained
representation of the state continues to grow more complicated with time,
while the coarse-grained representation first becomes first more and then less complicated.

\begin{figure}[h]\begin{center}%
\begin{tabular}
[c]{ccc}%
$\mathbf{t=0}$ & $\mathbf{t = 8 \times10^{6}}$ & $\mathbf{t = 2 \times10^{7}}%
$\\
\fbox{\includegraphics[height=75px, width=75px]{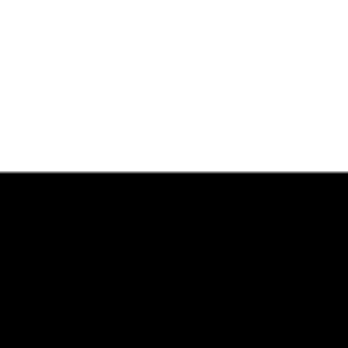}} &
\fbox{\includegraphics[height=75px, width=75px]{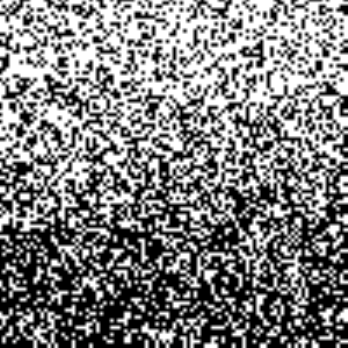}} &
\fbox{\includegraphics[height=75px, width=75px]{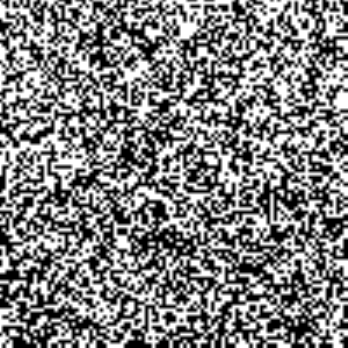}}\\
\fbox{\includegraphics[height=75px, width=75px]{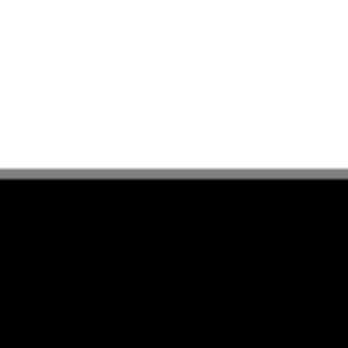}} &
\fbox{\includegraphics[height=75px, width=75px]{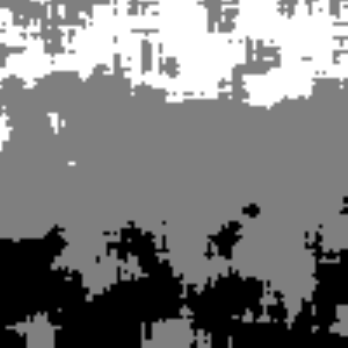}} &
\fbox{\includegraphics[height=75px, width=75px]{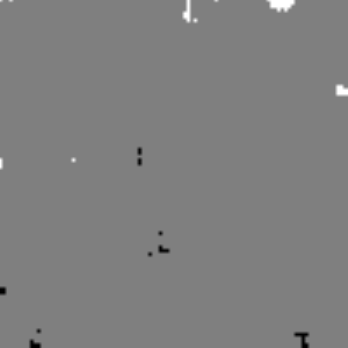}}\\
&  &
\end{tabular}
\caption{Visualization of the state of the interacting automaton of size $100$
over time. \ The top row of images is the fine-grained array, used to estimate
entropy. The bottom row is the coarse-grained array, used to estimate
complexity. \ From left to right, the images represent the automaton state at
the beginning of the simulation, at the complexity maximum, and at the end of
the simulation.}%
\label{interact}%
\end{center}\end{figure}

\begin{figure}[h]\begin{center}%
\begin{tabular}
[c]{ccc}%
$\mathbf{t=0}$ & $\mathbf{t = 5000}$ & $\mathbf{t = 10000}$\\
\fbox{\includegraphics[height=75px, width=75px]{fig3_00}} &
\fbox{\includegraphics[height=75px, width=75px]{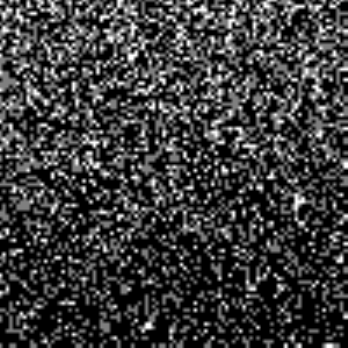}} &
\fbox{\includegraphics[height=75px, width=75px]{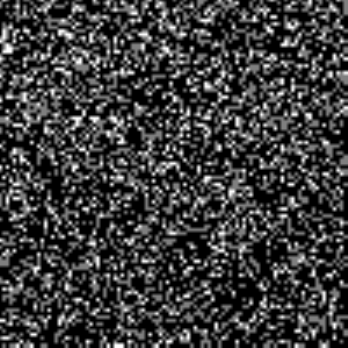}}\\
\fbox{\includegraphics[height=75px, width=75px]{fig3_10}} &
\fbox{\includegraphics[height=75px, width=75px]{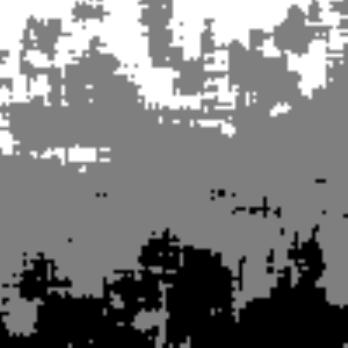}} &
\fbox{\includegraphics[height=75px, width=75px]{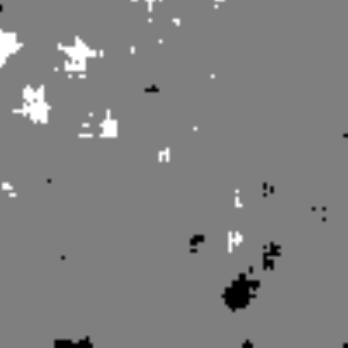}}\\
&  &
\end{tabular}
\caption{Visualization of the state of the non-interacting automaton of size
$100$ over time.}%
\label{noninteract}%
\end{center}\end{figure}

The \texttt{gzip} compression algorithm was used to generate the results in
Figure \ref{initialgraph}, and is used throughout when a general file
compression program is needed. \ The results achieved using the
coarse-graining metric are qualitatively similar when different compression
programs are used, as shown in Figure \ref{multicompress}.

\begin{figure}[h]\begin{center}
\fbox{\includegraphics[height=150px, width=200px]{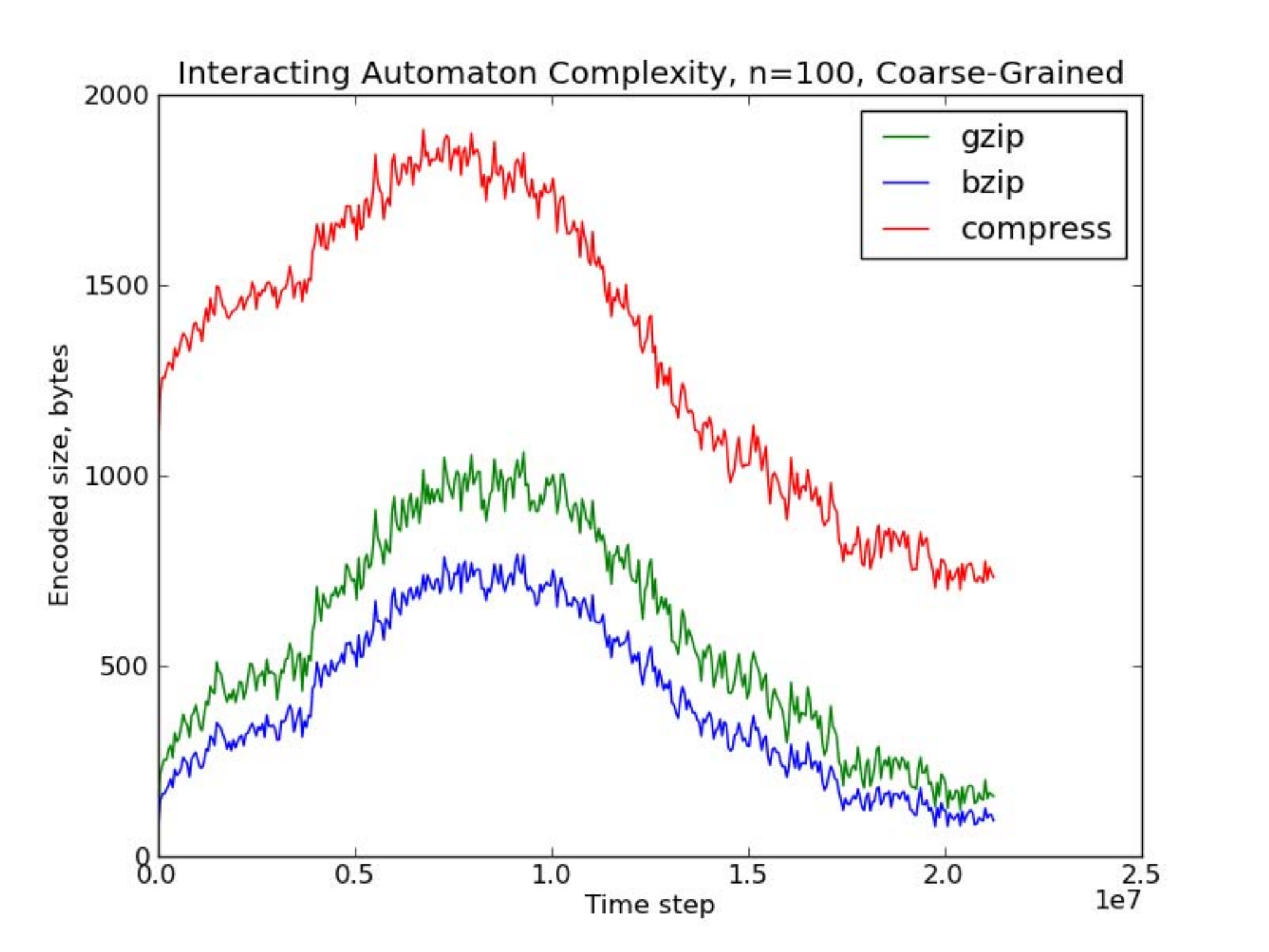}}%
\caption{Coarse-grained complexity estimates for a single simulation of the
interacting automaton, using multiple file compression programs.}%
\label{multicompress}%
\end{center}\end{figure}

Given these results, it is informative to examine how complexity varies with
$n$, the size of the automaton.

\begin{figure}[h]
\begin{center}
\fbox{\includegraphics[height=150px, width=200px]{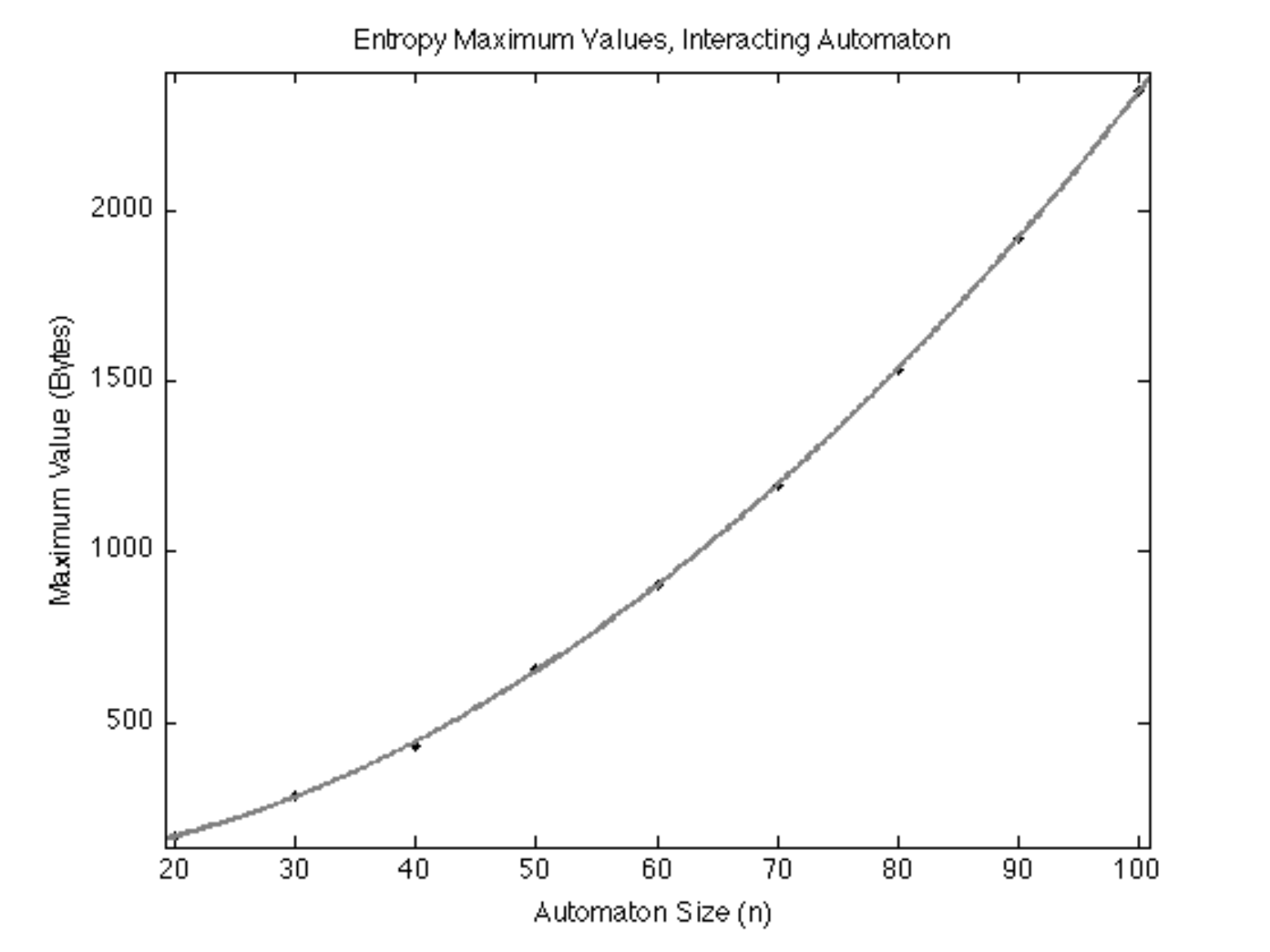}}
\fbox{\includegraphics[height=150px, width=200px]{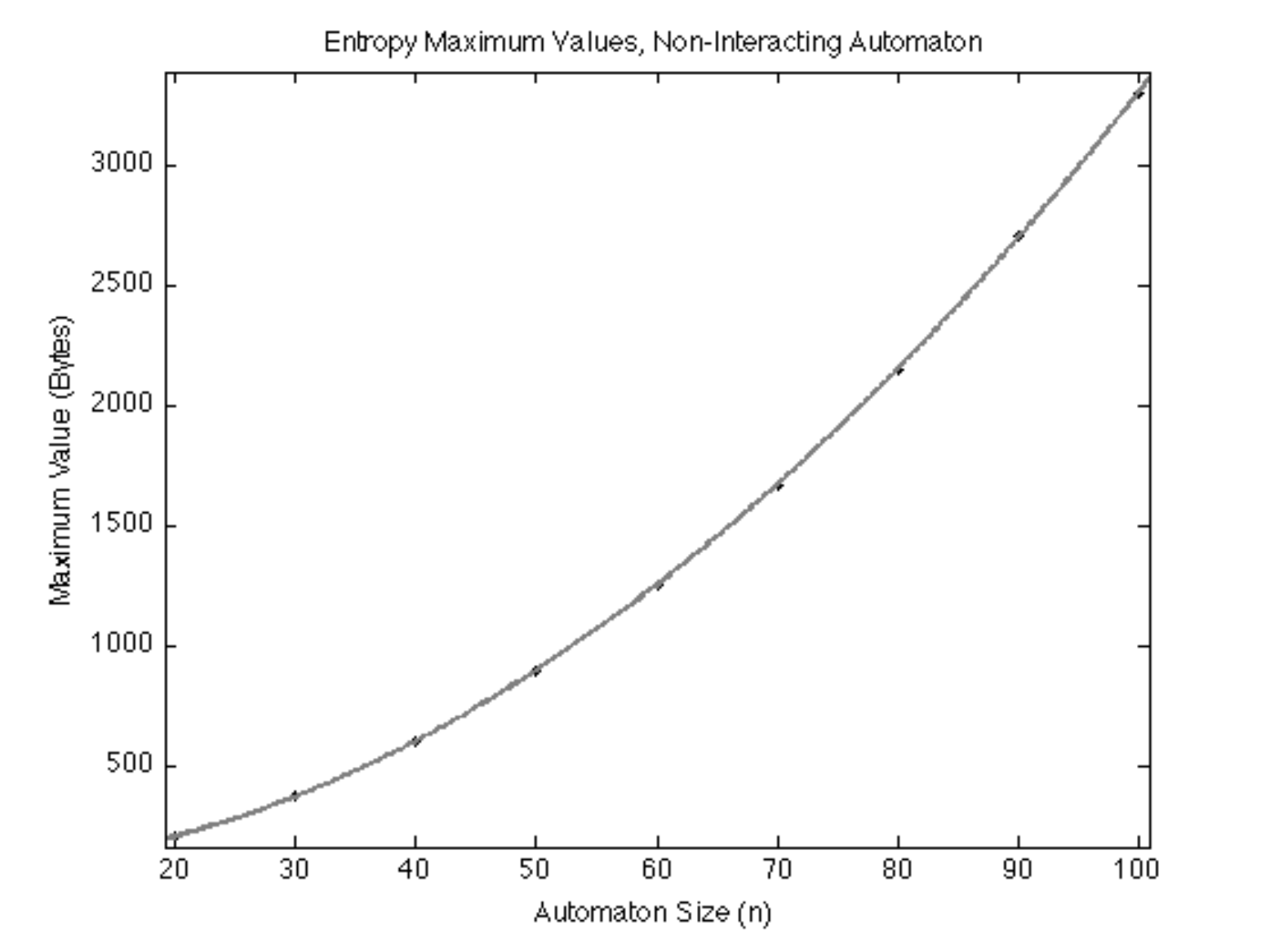}}%
\caption{Graphs of automaton size versus entropy maximum value. \ Quadratic
curve fits are shown, with $r^{2}$ values of $0.9999$ for both the interacting
and non-interacting automaton.}%
\label{nentropy}%
\end{center}\end{figure}\begin{figure}\begin{center}
\fbox{\includegraphics[height=150px, width=200px]{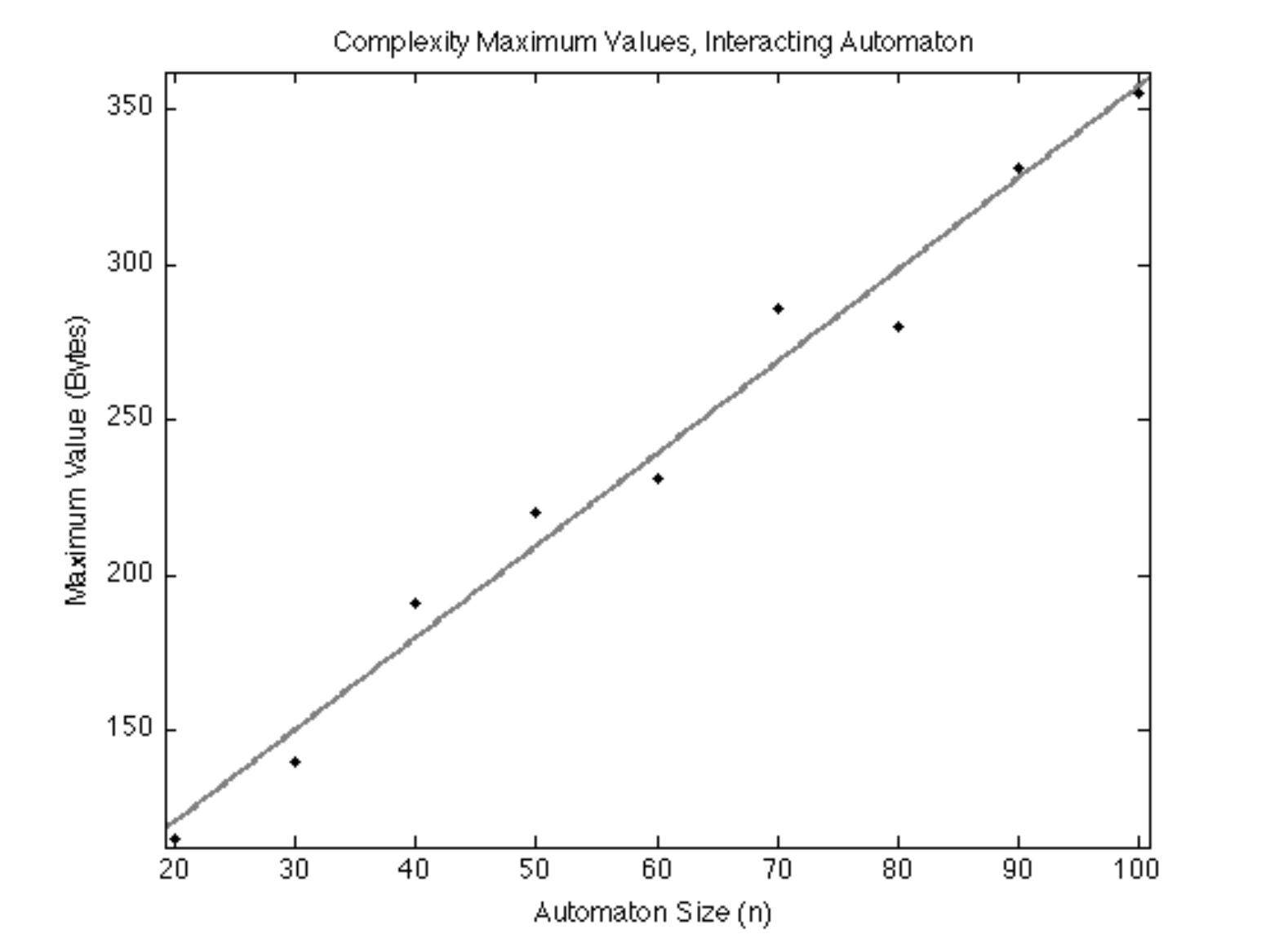}}
\fbox{\includegraphics[height=150px, width=200px]{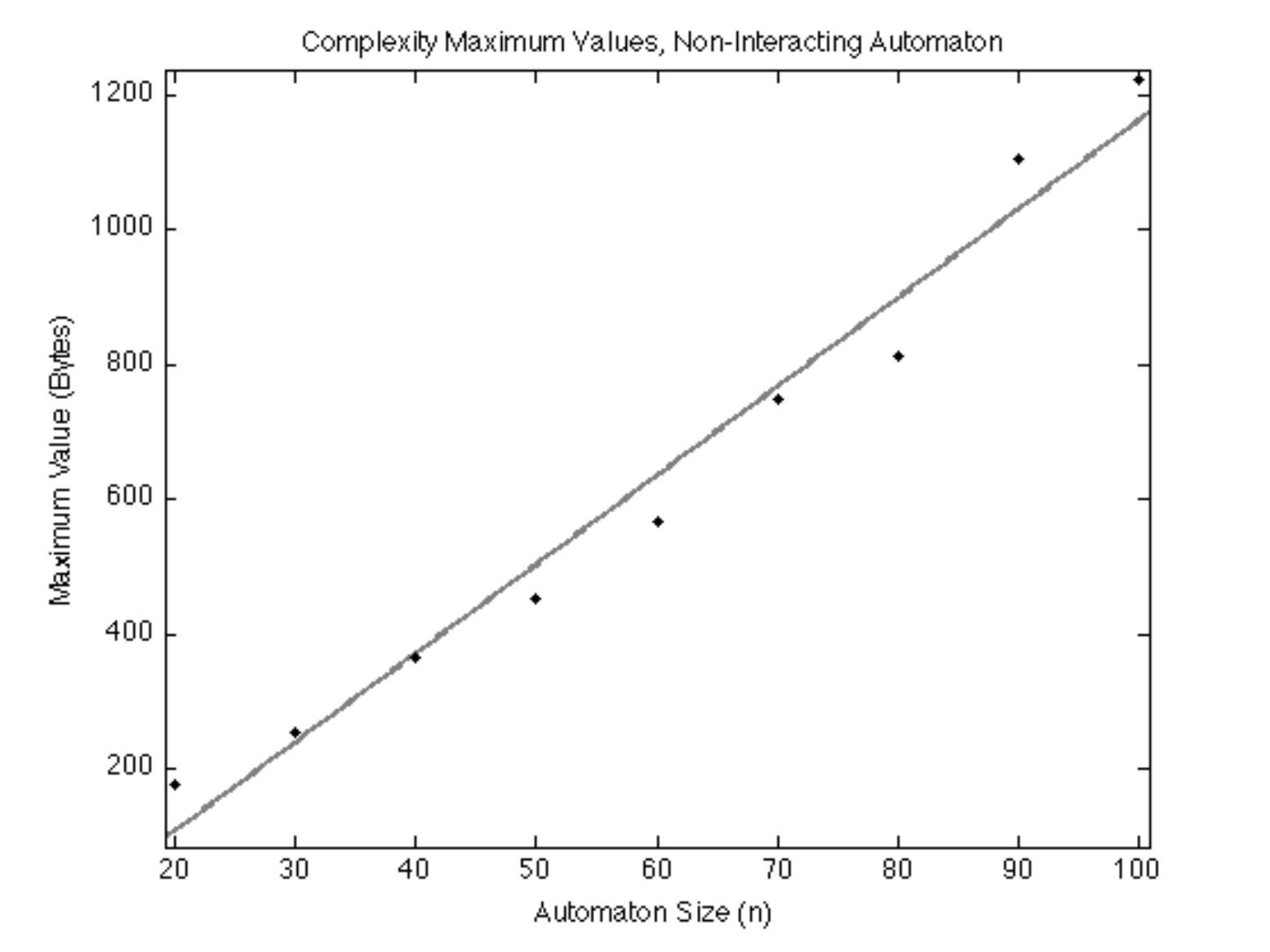}}%
\caption{Graphs of automaton size versus complexity maximum value. \ Linear
curve fits are shown, with $r^{2}$ values of $0.9798$ for the interacting
automaton and $0.9729$ for the non-interacting automaton.}%
\label{ncomplexity}%
\end{center}\end{figure}\begin{figure}\begin{center}
\fbox{\includegraphics[height=150px, width=200px]{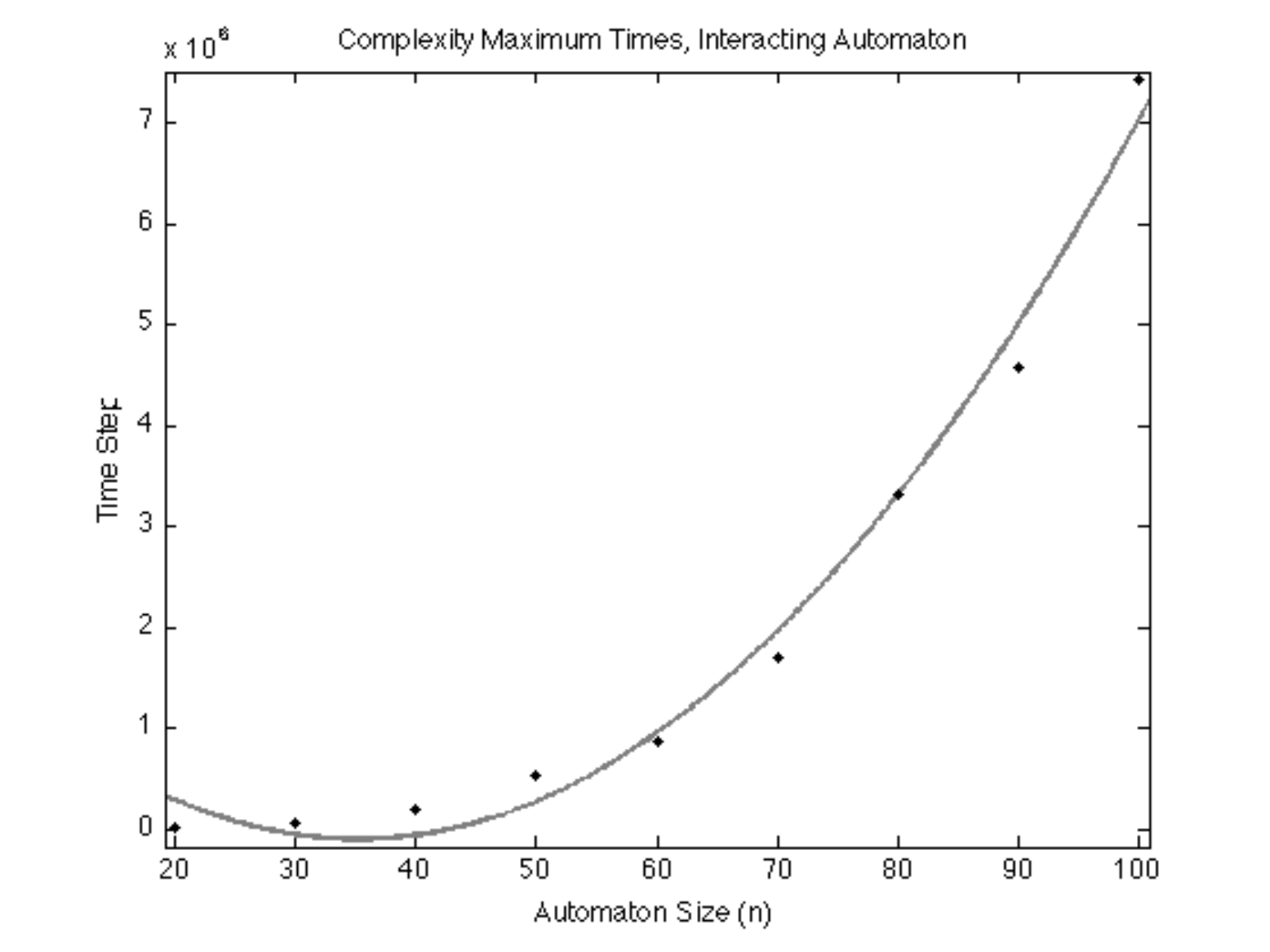}}
\fbox{\includegraphics[height=150px, width=200px]{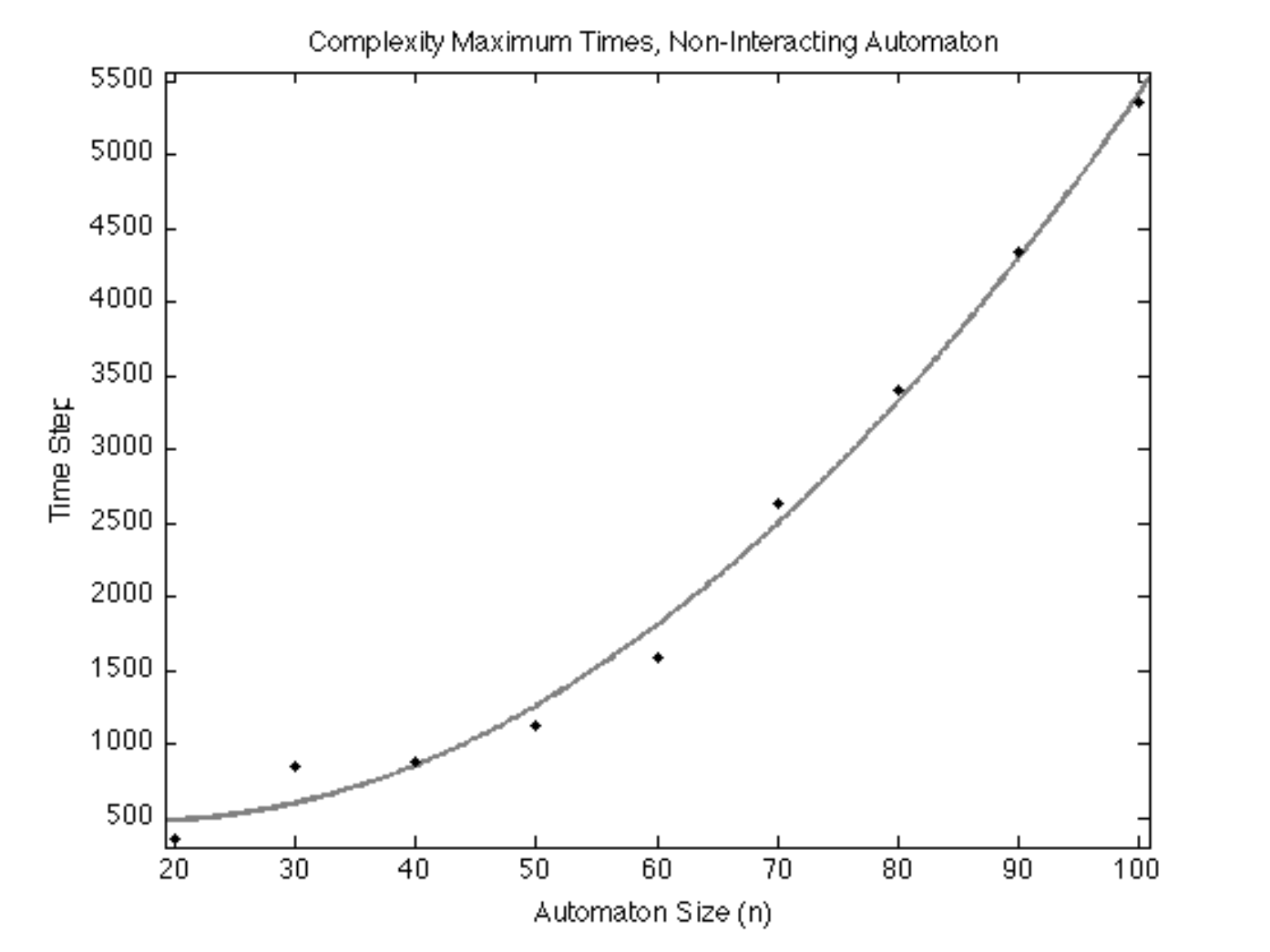}}%
\caption{Graphs of automaton size versus time to complexity maximum. Quadratic
curve fits are shown, with $r^{2}$ values of 0.9878 for the interacting
automaton and 0.9927 for the non-interacting automaton.}%
\label{timetomax}%
\end{center}\end{figure}

The well-fit quadratic curve for the maximum values of entropy (Figure
\ref{nentropy}) is expected. \ The maximum entropy of an automaton is
proportional to the number of particles in the automaton. \ This is because,
if the state of the automaton is completely random, then the compressed size
of the state is equal to the uncompressed size--the number of particles. \ As
the automaton size, $n$, increases, the number of particles increases to
$n^{2}$.

The maximum values of complexity appear to increase linearly as the automaton
size increases (Figure \ref{ncomplexity}). \ That is, maximum complexity is
proportional to the side length of the two-dimensional state array. \ This
result is expected, since the automaton begins in a state which is symmetric
along its vertical axis, and complexity presumably develops along a single
dimension of the automaton. \ The time that it takes for the automaton to
reach its complexity maximum appears to increase quadratically with the
automaton size, or proportionally to the number of particles in the automaton
(Figure \ref{timetomax}). \ This result is also expected, since the time for
$n^{2}$ particles to reach a particular configuration is proportional to
$n^{2}$.

\section{Adujsted Coarse-Graining Experiment\label{ACG}}

\subsection{Method\label{ACGMETHOD}}

Though the original coarse-graining approach produces the hypothesized
complexity pattern, the method of thresholding used in the previous
experiment---dividing the floating-point values into three buckets---has the
potential to introduce artificial complexity. \ Consider, for example, an
automaton state for which the coarse-grained array is a smooth gradient from
$0$ to $1$. \ By definition, there will be some row of the array which lies on
the border between two threshold values. \ Tiny fluctuations in the values of
the coarse-grained array may cause the cells in this row to fluctuate between
two threshold values. In such a case, the small fluctuations in this border
row would artificially increase the measured complexity of the coarse-grained
array. \ This case is illustrated in Figure \ref{coarsegraining}.

\begin{figure}\begin{center}
\fbox{\includegraphics[height=75px, width=75px]{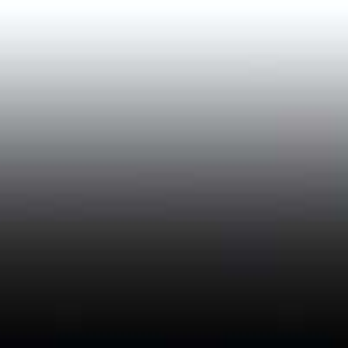}}
\fbox{\includegraphics[height=75px, width=75px]{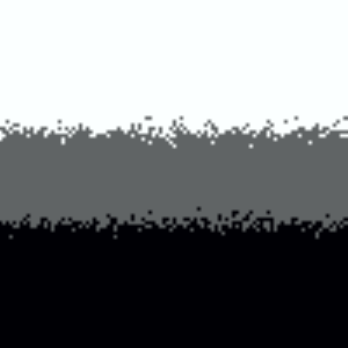}}%
\caption{A coarse-grained array consisting of a smooth gradient from $0$ to
$1$ is shown at left. \ At right is the same array after a small amount of
simulated noise has been added and the values have been thresholded.}%
\label{coarsegraining}%
\end{center}\end{figure}

We propose an adjustment to the coarse-graining algorithm that helps to
minimize these artifacts. \ First, we use a larger number of
thresholds---seven, in contrast to the three used in the original experiment.
\ Additionally, we allow each cell in the array to be optionally,
independently adjusted up or down by one threshold, in whatever manner
achieves the smallest possible file size for the coarse-grained array.

This adjustment helps to compensate for the thresholding artifacts--such
random fluctuations could be removed by adjusting the fluctuating pixels.
\ However, since each pixel can be adjusted independently, there are
$2^{n^{2}}$ possible ways to adjust a given coarse-grained array.

Because we cannot search through this exponential number of possible
adjustments to find the optimal one, we develop an approximation algorithm to
produce an adjustment that specifically targets pixels on the border between
two thresholds. \ Given the properties of the automaton---it begins with rows
of dark cells on top, and light cells on the bottom---it is likely that each
row of the coarse-grained array will contain similar values. \ Thus, we adjust
the coarse-grained array by using a majority algorithm. \ If a cell is within
one threshold value of the majority value in its row, it is adjusted to the
majority value.

The hope is that

\begin{enumerate}
\item[(1)] this adjustment will reduce artificial border complexity by
\textquotedblleft flattening\textquotedblright\ fluctuating border rows to a
single color,

\item[(2)] the adjustment will not eliminate actual complexity, since
complicated structures will create value differences in the coarse-grained
array that are large enough to span multiple thresholds.
\end{enumerate}

\subsection{Results and Analysis\label{ACGRESULTS}}

Results from simulation of the automaton using the adjusted coarse-graining
metric are shown in Figure \ref{bettergraph}. \ Visualizations of the
automaton state are shown in Figures \ref{size100} and
\ref{size100noninteract}.

\begin{figure}\begin{center}
\fbox{\includegraphics[height=150px, width=200px]{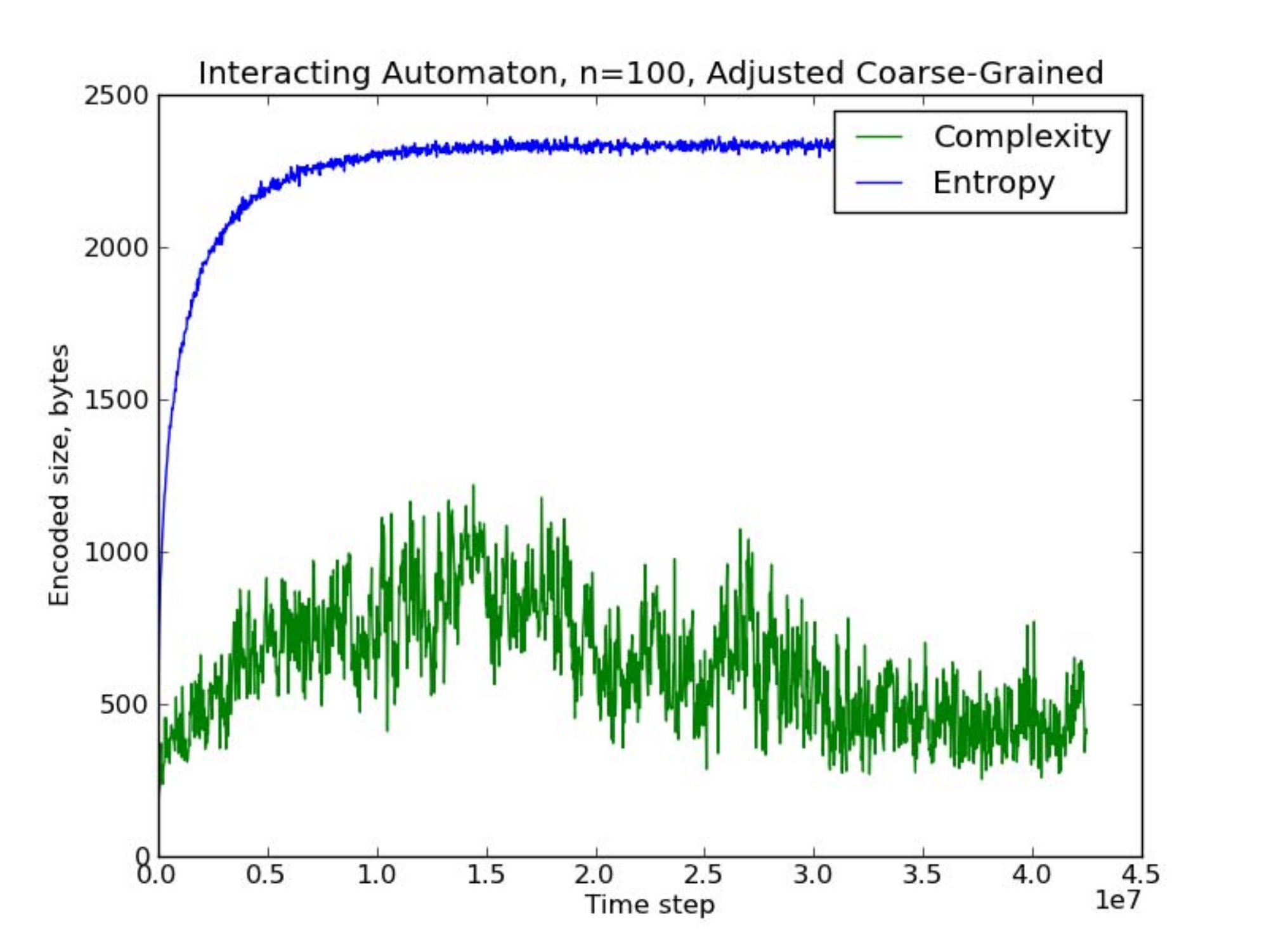}}
\fbox{\includegraphics[height=150px, width=200px]{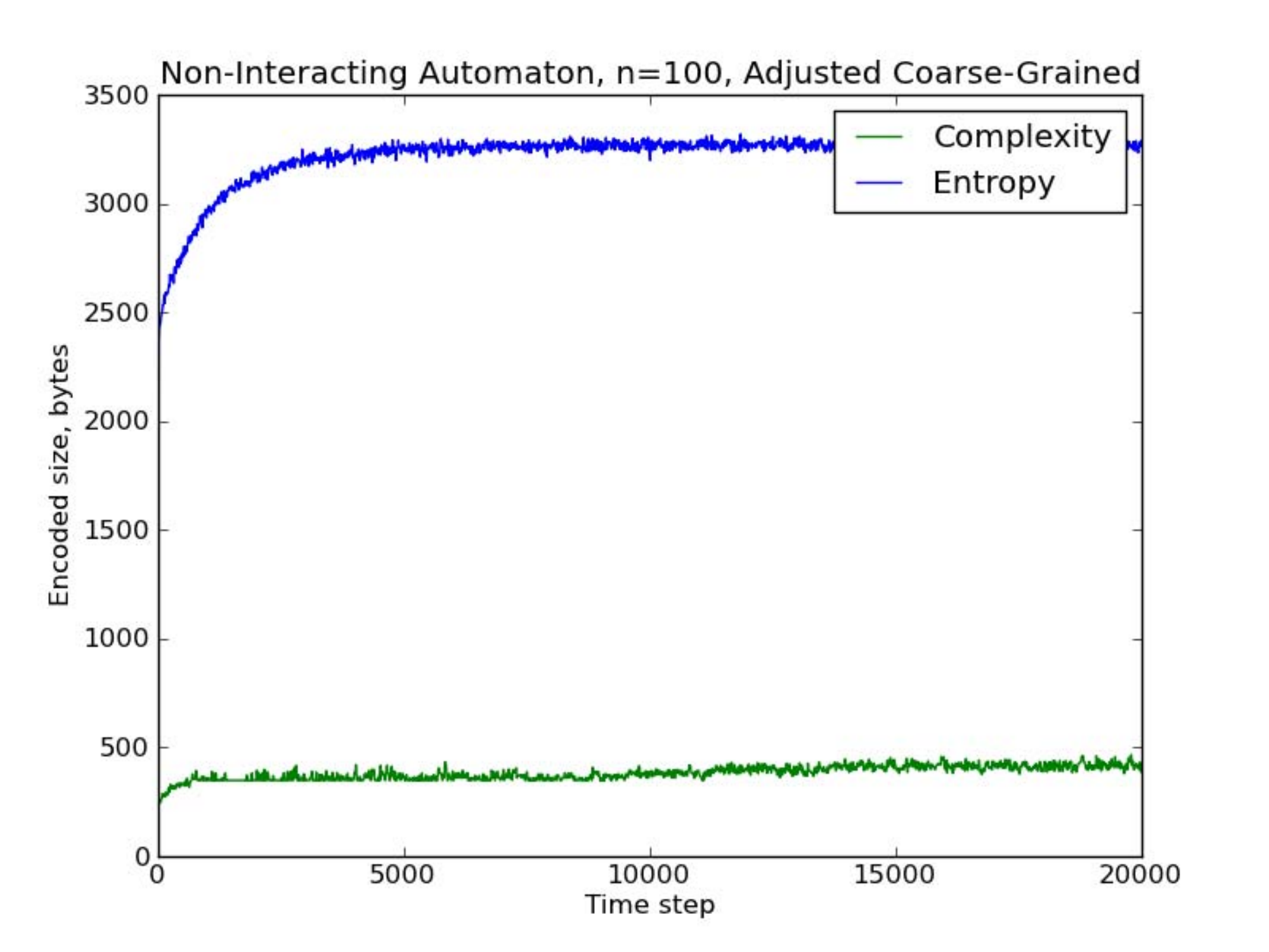}}%
\caption{The estimated entropy and complexity of an automaton using the
adjusted coarse-graining metric.}%
\label{bettergraph}%
\end{center}\end{figure}\begin{figure}\begin{center}%
\begin{tabular}
[c]{ccc}%
$\mathbf{t=0}$ & $\mathbf{t = 1.4 \times10^{7}}$ & $\mathbf{t = 4 \times
10^{7}}$\\
\fbox{\includegraphics[height=75px, width=75px]{fig3_00}} &
\fbox{\includegraphics[height=75px, width=75px]{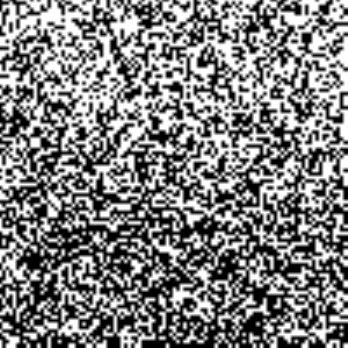}} &
\fbox{\includegraphics[height=75px, width=75px]{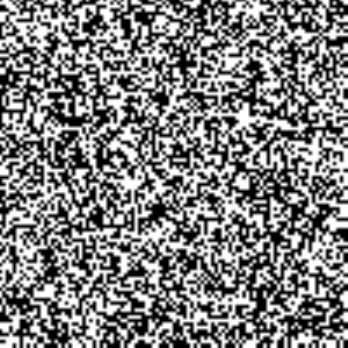}}\\
\fbox{\includegraphics[height=75px, width=75px]{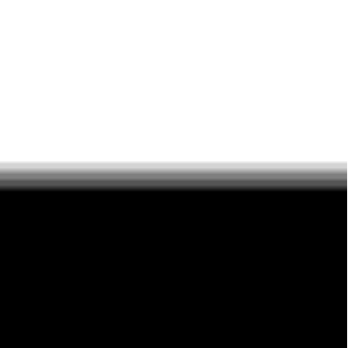}} &
\fbox{\includegraphics[height=75px, width=75px]{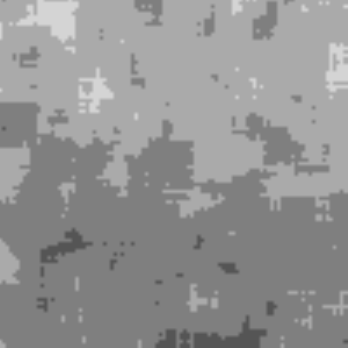}} &
\fbox{\includegraphics[height=75px, width=75px]{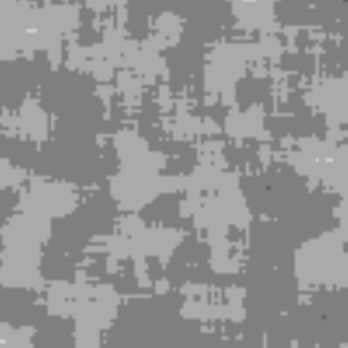}}\\
\fbox{\includegraphics[height=75px, width=75px]{fig11_10}} &
\fbox{\includegraphics[height=75px, width=75px]{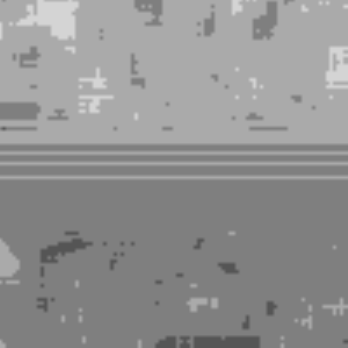}} &
\fbox{\includegraphics[height=75px, width=75px]{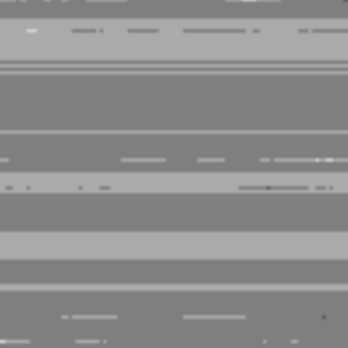}}\\
&  &
\end{tabular}
\caption{Visualization of the state of the interacting automaton of size $100$
over time. \ The rows of images represent the fine-grained state, the original
coarse-grained state, and the coarse-grained state after adjustment,
respectively.}%
\label{size100}%
\end{center}\end{figure}\begin{figure}\begin{center}%
\begin{tabular}
[c]{ccc}%
$\mathbf{t=0}$ & $\mathbf{t = 10000}$ & $\mathbf{t = 20000}$\\
\fbox{\includegraphics[height=75px, width=75px]{fig3_00}} &
\fbox{\includegraphics[height=75px, width=75px]{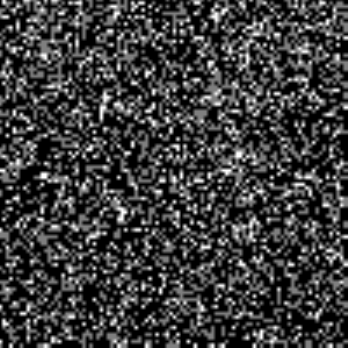}} &
\fbox{\includegraphics[height=75px, width=75px]{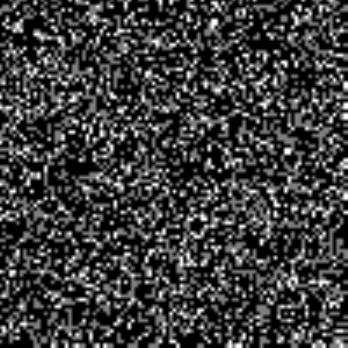}}\\
\fbox{\includegraphics[height=75px, width=75px]{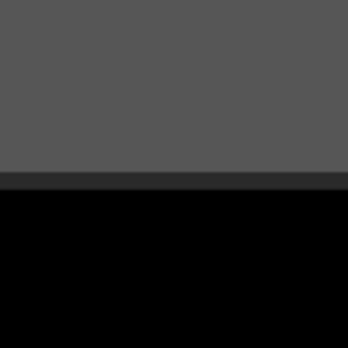}} &
\fbox{\includegraphics[height=75px, width=75px]{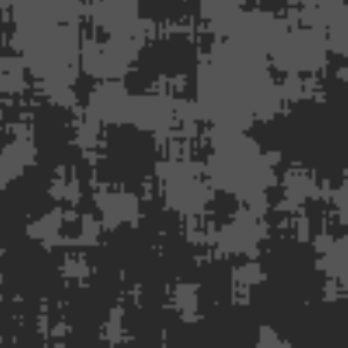}} &
\fbox{\includegraphics[height=75px, width=75px]{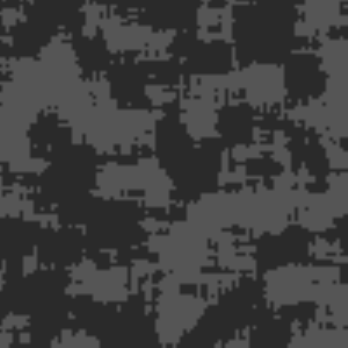}}\\
\fbox{\includegraphics[height=75px, width=75px]{fig12_10}} &
\fbox{\includegraphics[height=75px, width=75px]{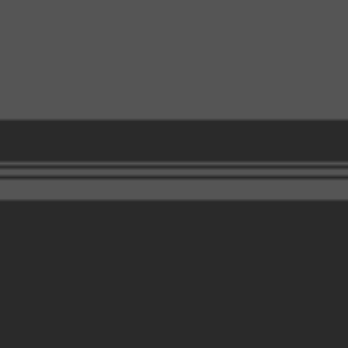}} &
\fbox{\includegraphics[height=75px, width=75px]{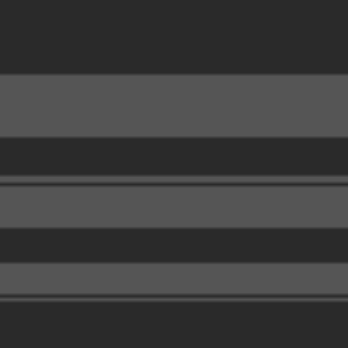}}\\
&  &
\end{tabular}
\caption{Visualization of the state of the non-interacting automaton of size
$100$ over time. Note that the coarse-grained images are darker than for the
previous coarse-graining metric, because a larger number of thresholds were
used.}%
\label{size100noninteract}%
\end{center}\end{figure}

While this metric is somewhat noisier than the original coarse-graining
method, it results in a similarly-shaped complexity curve for the interacting
automaton. \ For the non-interacting automaton, however, the complexity curve
is flattened to a lower value.

This result for the non-interacting automaton is actually borne out by
theoretical predictions. \ The basic story is as follows; for details, see
Appendix \ref{ANALYSIS}. \ If we consider the automaton state to
\textquotedblleft wrap around\textquotedblright\ from right to left, then by
symmetry, the expected number of cream particles at a particular location in
the automaton depends solely on the vertical position of that location. \ The
expectations of all cells in a particular row will be the same, allowing the
two-dimensional automaton state to be specified using a single dimension.
\ Modeling each particle of cream as taking a random walk from its initial
position, it is possible to calculate the expected number of particles at a
given position as a function of time. \ Further, Chernoff bounds can be used
to demonstrate that the actual number of particles in each grain of the
coarse-grained state is likely to be close to the expectation, provided that
the grain size is large enough. \ Since it is possible to specify the expected
distribution of particles in the non-interacting automaton at all times using
such a function, the complexity of the non-interacting automaton state is
always low.

We believe thresholding artifacts caused the apparent increase in complexity
for the non-interacting automaton when regular coarse-graining was used. \ Our
adjustment removes all of this estimated complexity from the non-interacting
automaton, but preserves it in the interacting automaton. \ This evidence
suggests that the interacting automaton model may actually have intermediate
states of high complexity, even if the non-interacting model never becomes complex.

\section{Conclusions and Further Work\label{CONC}}

Of the metrics considered in this project, the coarse-graining approaches such
as apparent complexity provide the most effective estimate of complexity that
produces results which mirror human intuition. \ However, this metric suffers
from the disadvantage that it is based on human intuition and perceptions of
complexity. \ Ideally, a complexity metric would be found which produces
similar results without relying on such assumptions. \ The OSCR approach seems
promising for its independence from these assumptions and for its theoretical
foundations. \ It is possible that a different implementation of this
algorithm could produce better results than the one we used for this project.

It would also be worthwhile to investigate other complexity metrics, beyond
those already explored in this paper. \ Shalizi et al.\ \cite{ssh} propose a
metric based on the concept of light cones. \ They define $C\left(  x\right)
$, the complexity of a point $x$ in the spacetime history, as the mutual
information between descriptions of its past and future light cones. \ Letting
$P\left(  x\right)  $ be the past light cone and $F\left(  x\right)  $ the
future light cone, $C\left(  x\right)  =H\left(  P\left(  x\right)  \right)
+H\left(  F\left(  x\right)  \right)  -H\left(  P\left(  x\right)  ,F\left(
x\right)  \right)  $. \ This metric is of particular interest because it
avoids the problem of artifacts created by coarse-graining; it can also be
approximated in a way that avoids the use of \texttt{gzip}. \ Running
experiments with the automaton using the light cone metric, and comparing the
results to those generated using coarse-graining, could provide more
information about both metrics.

Ultimately, numerical simulation is of limited use in reasoning about the
problem of complexity. \ Approximation algorithms can provide only an upper
bound, not a lower bound, on Kolmogorov complexity and sophistication. \ To
show that a system really does become complex at intermediate points in time,
it is necessary to find a lower bound for the system's complexity. \ Future
theoretical work could help provide such a lower bound, and could also
generate further insight into the origins of complexity in closed systems.

\section{Acknowledgments}

We thank Alex Arkhipov, Charles Bennett, Ian Durham, Dietrich Leibfried, Aldo
Pacchiano, and Luca Trevisan for helpful discussions.

\section{Appendix: The Non-Interacting Case\label{ANALYSIS}}

Let's consider the non-interacting coffee automaton on an $n\times n$ grid
with periodic boundary conditions. \ At each time step, each cream particle
moves to one of the $4$ neighboring pixels uniformly at random. \ Let
$a_{t}\left(  x,y\right)  $ be the number of cream particles at point $\left(
x,y\right)  $ after $t$ steps.

\begin{claim}
\label{eclaim}For all $x,y,t$, we have $\operatorname*{E}\left[  a_{t}\left(
x,y\right)  \right]  \leq1$.
\end{claim}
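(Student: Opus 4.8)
The plan is to use linearity of expectation to reduce the claim to a statement about a single random walk, and then to exploit the symmetry of the torus. First I would describe the initial configuration as a collection of particles: since the automaton starts with the top half of the grid filled with cream and the bottom half empty, each cell contains \emph{at most one} particle at time $0$. Because the particles move independently, $a_t(x,y)$ is a sum of indicator variables, one per particle, recording whether that particle sits at $(x,y)$ after $t$ steps. Writing $m_c\in\{0,1\}$ for the initial occupancy of cell $c$, and $p_t(c,(x,y))$ for the probability that a single walk started at $c$ lands at $(x,y)$ after $t$ steps, linearity gives
\begin{equation}
\operatorname*{E}\left[a_t(x,y)\right]=\sum_{c}\,m_c\cdot p_t\!\left(c,(x,y)\right).
\end{equation}

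The key step is to show that for each fixed target $(x,y)$ the incoming probabilities sum to one, i.e. $\sum_c p_t(c,(x,y))=1$. This follows from time-reversibility. In a single step the walk moves from $c$ to an adjacent cell with probability $1/4$; since adjacency on the torus is symmetric, the one-step transition probabilities satisfy $p_1(c,c')=p_1(c',c)$. Summing over all $t$-step paths and pairing each path from $c$ to $(x,y)$ with its reversal then yields $p_t(c,(x,y))=p_t((x,y),c)$. Consequently
\begin{equation}
\sum_{c}p_t\!\left(c,(x,y)\right)=\sum_{c}p_t\!\left((x,y),c\right)=1,
\end{equation}
the last equality holding because $p_t((x,y),\cdot)$ is itself a probability distribution over the grid. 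Equivalently, one may phrase this as the observation that the walk's transition matrix on the (vertex-transitive, $4$-regular) torus is doubly stochastic, so all of its powers have unit column sums.

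Combining the two displays with $m_c\le 1$ gives
\begin{equation}
\operatorname*{E}\left[a_t(x,y)\right]=\sum_{c}m_c\,p_t\!\left(c,(x,y)\right)\le\sum_{c}p_t\!\left(c,(x,y)\right)=1,
\end{equation}
which is the claim. I do not expect a genuine obstacle here; the only point requiring care is justifying the symmetry $p_t(c,c')=p_t(c',c)$ of the multi-step walk, but this is immediate from the symmetry of the single-step rule together with the path-reversal bijection. It is worth noting that the periodic boundary conditions are used precisely to guarantee this symmetry (and hence the doubly-stochastic property) uniformly at \emph{every} cell, including those that would otherwise sit near a boundary; this is what lets the bound hold at all $(x,y)$ without exceptional cases.
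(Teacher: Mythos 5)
Your proof is correct, but it takes a genuinely different route from the paper's. The paper argues by induction on $t$: the base case is $a_{0}\left(x,y\right)\in\left\{0,1\right\}$, and the inductive step is the one-step recurrence
\begin{equation}
\operatorname*{E}\left[a_{t+1}\left(x,y\right)\right]=\frac{\operatorname*{E}\left[a_{t}\left(x-1,y\right)\right]+\operatorname*{E}\left[a_{t}\left(x+1,y\right)\right]+\operatorname*{E}\left[a_{t}\left(x,y-1\right)\right]+\operatorname*{E}\left[a_{t}\left(x,y+1\right)\right]}{4},
\end{equation}
so the maximum expected occupancy can never increase, since an average of numbers each at most $1$ is at most $1$. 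You instead work directly at time $t$: you decompose $a_{t}\left(x,y\right)$ into indicators of independent walks and show the $t$-step kernel has unit column sums (double stochasticity, via path reversal on the torus). The two arguments are really two faces of the same fact --- the paper's recurrence says precisely that the expectation vector evolves under a doubly stochastic averaging operator, and its induction amounts to the sup-norm contraction that you derive from the column-sum identity. What your version buys is an exact heat-kernel formula for $\operatorname*{E}\left[a_{t}\left(x,y\right)\right]$ and an explicit identification of where periodic boundary conditions enter ($4$-regularity at every cell, with no boundary exceptions); what the paper's version buys is brevity, needing no path-reversal bijection or matrix symmetry, only linearity of expectation and the observation that averaging preserves an upper bound.
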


\begin{proof}
By induction on $t$. \ If $t=0$, then $a_{0}\left(  x,y\right)  \in\left\{
0,1\right\}  $. \ Furthermore, by linearity of expectation,%
\[
\operatorname*{E}\left[  a_{t+1}\left(  x,y\right)  \right]  =\frac
{\operatorname*{E}\left[  a_{t}\left(  x-1,y\right)  \right]
+\operatorname*{E}\left[  a_{t}\left(  x+1,y\right)  \right]
+\operatorname*{E}\left[  a_{t}\left(  x,y-1\right)  \right]
+\operatorname*{E}\left[  a_{t}\left(  x,y+1\right)  \right]  }{4}.
\]
\end{proof}

Now let $B$ be an $L\times L$\ square of pixels, located anywhere on the
$n\times n$\ grid. \ Let $a_{t}\left(  B\right)  $\ be the number of cream
particles in $B$ after $t$ steps. \ Clearly%
\begin{equation}
a_{t}\left(  B\right)  =\sum_{\left(  x,y\right)  \in B}a_{t}\left(
x,y\right)  .
\end{equation}
So it follows from Claim \ref{eclaim}\ that $\operatorname*{E}\left[
a_{t}\left(  B\right)  \right]  \leq L^{2}$.

Fix some constant $G$, say $10$. \ Then call $B$ \textquotedblleft
bad\textquotedblright\ if $a_{t}\left(  B\right)  $\ differs from
$\operatorname*{E}\left[  a_{t}\left(  B\right)  \right]  $ by more than
$L^{2}/G$. \ Suppose that at some time step $t$, no $B$ is bad. \ Also,
suppose we form a coarse-grained image by coloring each $B$ one of $G$ shades
of gray, depending on the value of%
\begin{equation}
\left\lfloor \frac{a_{t}\left(  B\right)  G}{L^{2}}\right\rfloor
\end{equation}
(or we color $B$ white if $a_{t}\left(  B\right)  >L^{2}$). \ Then it's clear
that the resulting image will be correctable, by adjusting each color by
$\pm1$, to one where all the $B$'s within the same row are assigned the same
color---and furthermore, that color is simply%
\begin{equation}
\left\lfloor \frac{\operatorname*{E}\left[  a_{t}\left(  B\right)  \right]
G}{L^{2}}\right\rfloor .
\end{equation}
If this happens, though, then the Kolmogorov complexity of the coarse-grained
image can be at most $\log_{2}\left(  n\right)  +\log_{2}\left(  t\right)
+O\left(  1\right)  $. \ For once we've specified $n$ and $t$, we can simply
\textit{calculate} the expected color for each $B$, and no color ever deviates
from its expectation.

So our task reduces to upper-bounding the probability that $B$ is bad. \ By a
Chernoff bound, since $a_{t}\left(  B\right)  $\ is just a sum of independent,
$0/1$ random variables,%
\begin{equation}
\Pr\left[  \left\vert a_{t}\left(  B\right)  -\operatorname*{E}\left[
a_{t}\left(  B\right)  \right]  \right\vert >\delta\operatorname*{E}\left[
a_{t}\left(  B\right)  \right]  \right]  <2\exp\left(  -\frac
{\operatorname*{E}\left[  a_{t}\left(  B\right)  \right]  \delta^{2}}%
{3}\right)  .
\end{equation}
Plugging in $L^{2}/G=\delta\operatorname*{E}\left[  a_{t}\left(  B\right)
\right]  $, we get%
\begin{equation}
\Pr\left[  \left\vert a_{t}\left(  B\right)  -\operatorname*{E}\left[
a_{t}\left(  B\right)  \right]  \right\vert >\frac{L^{2}}{G}\right]
<2\exp\left(  -\frac{L^{4}}{3G^{2}\operatorname*{E}\left[  a_{t}\left(
B\right)  \right]  }\right)  .
\end{equation}
Since $\operatorname*{E}\left[  a_{t}\left(  B\right)  \right]  \leq L^{2}$
from above, this in turn is at most%
\begin{equation}
2\exp\left(  -\frac{L^{2}}{3G^{2}}\right)  .
\end{equation}
Now, provided we choose a coarse-grain size%
\begin{equation}
L\gg G\sqrt{3\ln\left(  2n^{2}\right)  }=\Theta\left(  G\sqrt{\log n}\right)
,
\end{equation}
the above will be much less than $1/n^{2}$. \ In that case, it follows by the
union bound that, at each time step $t$, with high probability \textit{none}
of the $L\times L$\ squares $B$ are bad (since there at most $n^{2}$ such
squares). \ This is what we wanted to show.

\bibliographystyle{plain}
\bibliography{coffee}

\end{document}